\algrenewcommand\algorithmicrequire{\textbf{Input:}}
\algrenewcommand\algorithmicensure{\textbf{Output:}}
\newcommand{\ovrt}{\ominus}
\newcommand{\ohrz}{\mathbin{\rotatebox[origin=c]{90}{$\ovrt$}}}
\newcommand{\no}[1]{}
\newcommand{\vir}[1]{``#1''}
\newcommand{\zero}{\texttt{0}}
\newcommand{\one}{\texttt{1}}
\newcommand{\dd}{\mathinner{.\,.}}
\newcommand{\Mmn}{M_{m\times n}}
\newcommand{\Mnn}{M_{n\times n}}
\newcommand{\mxn}{[1\dd m] \times [1\dd n]}
\newcommand{\gexp}{\mathtt{exp}}
\newcommand{\rowlin}{\mathtt{rlin}}
\newcommand{\rhs}{\mathtt{rhs}}
\newcommand{\size}{\mathtt{size}}
\newcommand{\map}{\mathtt{map}}
\newcommand{\deltaCM}{\delta_\square}
\definecolor{DarkGreen}{RGB}{40,160,20}
\newcommand{\GRremark}[1]{\marginpar{\tiny \flushleft{[GR]~#1}}}
\title{Exploring Repetitiveness Measures for Two-Dimensional Strings}
\author{Giuseppe Romana}{Dipartimento di Matematica e Informatica, Università di Palermo, Italy}{giuseppe.romana01@unipa.it}{https://orcid.org/0000-0002-3489-0684}{Partly supported by the MUR PRIN Project \vir{PINC, Pangenome INformatiCs: from Theory to Applications} (Grant No.\ 2022YRB97K)}
\author{Marinella Sciortino}{Dipartimento di Matematica e Informatica, Università di Palermo, Italy}{marinella.sciortino@unipa.it}{https://orcid.org/0000-0001-6928-0168}{Partly supported by the MUR PRIN Project \vir{PINC, Pangenome INformatiCs: from Theory to Applications} (Grant No.\ 2022YRB97K)}
\author{Cristian Urbina}{Department of Computer Science, University of Chile, Chile \and Centre for Biotechnology and Bioengineering (CeBiB), Chile}{crurbina@dcc.uchile.cl}{https://orcid.org/0000-0001-8979-9055}{Partly supported by ANID-Subdirección de Capital Humano/Doctorado Nacional/ 2021-21210580}
\authorrunning{G. Romana, M. Sciortino, and C. Urbina} 
\keywords{Two--dimensional strings, Repetitiveness measures, Text compression} 
\begin{document}

\maketitle
\begin{abstract}
Detecting and measuring repetitiveness of strings is a problem that has been extensively studied in data compression and text indexing.
However, when the data are structured in a non-linear way, like in the context of two-dimensional strings, inherent redundancy offers a rich source for compression, yet systematic studies on repetitiveness measures are still lacking.
In the paper we introduce extensions of repetitiveness measures to general two-dimensional strings.
In particular, we propose a new extension of the measures $\delta$ and $\gamma$, diverging from previous square-based definitions proposed in [Carfagna and Manzini, SPIRE 2023].
We further consider generalizations of macro schemes and straight line programs for the 2D setting and show that, in contrast to what happens on strings, 2D macro schemes and 2D SLPs can be both asymptotically smaller than $\delta$ and $\gamma$. 
The results of the paper can be easily extended to $d$-dimensional strings with $d > 2$.
\end{abstract}

\section{Introduction}

In the latest decades, the amount of data generated in the world has become massive. Nevertheless, in many fields, most of this data is highly repetitive. Repetitiveness has been showed to be an exploitable source of compressibility. In fact, compressors exploiting repetitiveness can perform much better  on repetitive datasets like genome collections, compared to classic compressors approaching empirical entropy. 


Two-dimensional data, ranging from images to matrices, often contains inherent redundancy, wherein identical or similar substructures recur throughout the dataset. This great source of redundancy can be exploited for compression. Very recently, Brisaboa et al. introduced the 2D Block Trees to compress images, graphs, and maps~\cite{BrisaboaGGN24}. 
On the theoretical side, while in the one-dimensional case much attention has been given to the study and analysis of measures of repetitiveness to assess the performance of compressed indexing data structures~\cite{NavarroSurvey}, in the two-dimensional context, since the data are more complex, there is still no systematic study of measures that can effectively capture repetitiveness.
It must also be said that in the one-dimensional case, an important role is played by the $\delta$ measure, which computes the maximum number of substrings of the same length that occur in a text, and $\gamma$ measure, which represents the smallest number of positions (string attractors) in the text at which all substrings can be found. 
These measures, although unreachable or unknown to be reachable, lower-bound all other repetitiveness measures based on \emph{copy-paste} mechanisms.
Furthermore, the optimal space to represent a text can be expressed as a function of $\delta$~\cite{Delta}.
In the two-dimensional case, an important approach in this direction has been made by Carfagna and Manzini in~\cite{CarfagnaManzini2023}, where the repetitiveness measures $\delta$ and $\gamma$ are extended to square two-dimensional strings, by exploring square substructures within the data. They have shown that some properties that hold for one-dimensional strings are still preserved in the two-dimensional case, and the space used by a two-dimensional block tree has been bounded in terms of the extension of $\delta$.



In this paper, we propose extensions of repetitiveness measures to generic two-dimensional strings. First of all, we introduce the measures $\delta_{2D}$ and $\gamma_{2D}$, which differently from the measures defined in~\cite{CarfagnaManzini2023}, use rectangular substrings, instead of square, in their definition. We show that our measures, while retaining many of the properties valid in one-dimensional case, can exhibit a significant gap when compared to those defined using the strategy proposed in~\cite{CarfagnaManzini2023}, even if applied to one-dimensional strings.

Furthermore, we generalize straight-line programs (SLPs) and run-length straight-line programs (RLSLPs). In particular, we introduce a new repetitiveness measure $g_{2D}$ based on SLP and we show that, although it is NP-hard to compute, it is possible to access to an arbitrary position of the 2D strings in $O(g_{{rl}_{2D}})$ space and $O(h)$ time, where $h$ is the length of the parse tree of the 2D-SLP. Note that the time to access any cell can become logarithmic in the size of the 2D string when a balancing procedure is applied~\cite{GJL2021}. An analogous result can be obtained when 2D-RLSLP and the correspondent repetitiveness measure $g_{{rl}_{2D}}$ are considered. 
Finally, we introduce macro schemes for 2D strings that generalize bidirectional macro schemes. We show that the mutual relationship among $g_{2D}$, $g_{{rl}_{2D}}$ and the size $b_{2D}$ of the smallest valid 2D macro scheme are the same as for one-dimensional strings. 

However, we show that some relevant relationship between $\delta$, $\gamma$ and the other repetitiveness measures are lost when they are extended to the 2D setting. For instance, it is well known that for 1D strings the relationship $\delta \leq \gamma \leq b\leq g_{rl} \leq g$ holds. On the other hand, in the 2D setting, it can happen that $\delta_{2D} = \Omega(g_{2D}\sqrt[4]{N}/\log N)$ for some 2D string families, where $N$ is the size of the 2D strings. 
This bound is still valid also when the strategy proposed in~\cite{CarfagnaManzini2023} is used.

For space reasons, in this paper we focus only on 2D strings. Note that the repetitiveness measure defined and studied in this paper can be naturally extended when $d$-dimensional strings with $d > 2$ are considered. In this context, all the results provided in the paper are still valid. More extensive and detailed results on higher dimensional strings are deferred to the full paper.
%

\section{Preliminaries}

We intend to extend repetitiveness measures from 1D strings to 2D strings, so in this section we recall such notions for the 1D setting. We only describe the repetitiveness measures we are focusing on. For an in-depth survey, the reader can see~\cite{NavarroSurvey}.  We assume the RAM model of computation with words of $\Theta(\log n)$ bits. Hence, $O(\mu)$ space also means $O(\mu \log n)$ bits. 

\paragraph*{Strings}

Let $\Sigma = \{a_1 \dd a_\sigma\}$ be a finite ordered set of \emph{symbols}, which we call an \emph{alphabet}.
A \emph{string}\footnote{We use the concepts of \emph{string}, \emph{text}, and \emph{word},  interchangeably.} $S$ is a finite sequence of symbols from the alphabet $\Sigma$, and we denote by $S[i]\in\Sigma$ the $i$th symbol in $S$.
The \emph{length} of a string $S$, denoted by $|S|$, is the number of symbols contained in $S$.
The \emph{empty string}, that is the only string of length 0, is denoted by $\varepsilon$.
We further denote by $S[i\dd j]$ the string in $S$ starting at position $1\leq i \leq |S|$ and ending at position $1\leq j \leq |S|$, that is $S[i\dd j] = S[i] S[i+1] \cdots S[j]$. When $i>j$, we assume that $S[i\dd j] = \varepsilon$.
Given two strings $S[1\dd n]$ and $T[1 \dd m]$, the \emph{concatenation} of $S$ and $T$ is the string $S\cdot T = S[1]S[2]\cdots S[n] \cdot T[1] T[2]\cdots T[m]$.
When the context is clear, the operator for the concatenation $\cdot$ is omitted.
The set of all finite strings over $\Sigma$ is denoted $\Sigma^*$.
A string $F$ is a \emph{factor} (or substring) of a string $S[1\dd n]$ if there exist two integer values $1\leq i,j \leq n$ such that $F = S[i\dd j]$. In this case, we say that the $F$ has an \emph{occurrence} at position $i$ in $S$.
The \emph{substring complexity} function of a string $S$, denoted by
$P_S$, counts for each integer $k>0$ the number of distinct factors of length $k$ in $S$. 

\paragraph*{Measure $\delta$}
The measure $\delta(T[1 \dd n])$ is defined as $\max_{k \in [1..n]}(p_T(k)/k)$~\cite{RRR2013,Delta}. The measure $\delta$ has many desirable properties: i) it lower-bounds all the measures in the next paragraphs; ii) it is computable in linear time; iii) it is resistant to many string operations; iv) though $\delta$ is unreachable~\cite{Delta}, there exist algorithms producing representations of strings using $O(\delta \log (n /\delta))$ space, with strong indexing functionalities~\cite{Delta, KNO2023}. While there exist reachable repetitiveness measures that can be smaller than $\delta$ in some string families~\cite{NUspire21.1,NUcpm23.2,NUlatin24.2}, $\delta$ is still considered the gold standard when the only source of repetitiveness to be exploited are explicit copies.

\paragraph*{String Attractors and Measure $\gamma$}

A \emph{string attractor}~\cite{KP18} for a string $T[1\dd n]$ is a set of positions $\Gamma \subseteq [1\dd n]$ satisfying that for any substring $T[i\dd j]$, there exists a copy $T[i'\dd j']$ (i.e, $T[i\dd j] = T[i'\dd j']$) such that $i' \le k \le j'$ for some $k \in \Gamma$. Having a small string attractor implies that a few different positions in $T$ concentrate all the distinct substrings. The size of the smallest string attractor for a string is denoted $\gamma(T)$ and is considered a measure of repetitiveness. It is unknown to be reachable, and also NP-hard to compute~\cite{KP18}, though.

\paragraph*{(Run-Length) Straight-Line Programs and Measures $g$ and $g_{rl}$} A \emph{straight-line program} (SLP) is a context-free grammar $G = (V,\Sigma, R, S)$ (the elements of the tuple are the \emph{variables}, \emph{terminal symbols}, \emph{rules}, and the \emph{starting variable}, respectively) satisfying the following constraints: i) for each variable, there is only one rule $A \rightarrow \rhs(A)$; ii) all the rules have either the form $A \rightarrow a$ or the form $A \rightarrow BC$, where $a$ is a terminal symbol and $B, C$ are variables; iii) there is a total order between variables such that if $B$ appear in $\rhs(A)$, then $B > A$. These restrictions ensure that any variable $A$ can derive a unique string of terminals, namely $\gexp(A)$, which is defined as $\gexp(A) = a$ if $A \rightarrow a$, or $\gexp(A) = \gexp(B)\gexp(C)$ if $A \rightarrow BC$. Hence, each SLP generates a unique string $T[1 \dd n] = \gexp(G)=\gexp(S)$. The \emph{parse tree} of an SLP is an ordinal labeled tree where $S$ is the root, and the children of a variable $A$ are the variables in $\rhs(A)$ (possibly repeated), in left-to-right order. The \emph{height} of an SLP is the length of a maximal path from root to leaf in its parse tree. The \emph{grammar tree} is obtained by pruning all the subtrees with root $A$ (i.e., making $A$ a leaf node) in the parse tree of $G$, except one. If it is not specified, we assume that this exception is the leftmost occurrence of the variable. 
The size $\size(G)$ of an SLP $G$ can be defined as the number of its variables~\cite{Rytter2003}. Note that such a value is proportional to the space needed to represent the SLP. Given a text $T$, the measure of repetitiveness $g(T)$ is defined as the size of the smallest SLP generating $T$. Though computing $g$ is NP-hard~\cite{Charikar2005}, there exist algorithms producing $log$-approximations of the smallest grammar~\cite{Rytter2003,Jez2015}, and also heuristics like RePair~\cite{RePair}, which produce small grammars in practice. SLPs are a popular compression scheme because many relevant queries on the text, like direct access, locate, and count can be answered efficiently in compressed form. To provide these functionalities and more, usually having a \emph{balanced} SLP, i.e., an SLP with $O(\log n)$ height, comes in handy. It has been proven that any SLP can be balanced without asymptotically increasing its size~\cite{GJL2021}. It also holds that $g$ is upper-bounded by $O(\gamma\log^2(n/\gamma))$~\cite{KP18}.


A \emph{run-length straight-line program} (RLSLP) is an SLP that supports rules of the form $A \rightarrow B^k$, whose expansion is defined as $\gexp(A) = \gexp(B)^k$. This extension allows RLSLP, for instance, to represent strings in the family $\{\zero^n\,|\, n \geq 1\}$ using $O(1)$ space, which is asymptotically smaller than the $\Omega(\log n)$ space a regular SLP would need. There exist algorithms constructing small RLSLPs and providing strong functionality~\cite{Delta, KNO2023}. It also has been proven that RLSLPs can be balanced without asymptotically increasing their size~\cite{NOU2022}. It also holds that $g_{rl} = O(\delta \log (n/\delta))$~\cite{Delta}.

\paragraph*{Bidirectional Macro Schemes and Measure $b$ }
A \emph{bidirectional macro scheme}~\cite{SS82} is a factorization of a text $S[1\dd n]$ into substrings called \emph{phrases} $X_1 \dd X_r$ where each phrase of length greater than 1 has a source in $S$ from where it is copied starting at a different position. A phrase of length 1 is called an \emph{explicit symbol}. For a macro scheme to be  \emph{valid} or \emph{decodifiable}, there must exist a function $\map: [1 \dd n] \cup \{\bot\} \rightarrow [1 \dd n] \cup \{\bot\}$ satisfying that: i) $\map(\bot) = \bot$, and if $S[i]$ is a explicit symbol, then $\map(i) = \bot$; ii) for each non-explicit phrase  $S[i\dd j]$, it must hold that $\map(i+t) = \map(i) + t$ for $t \in [0\dd j-i]$; iii) for each $i \in [1\dd n]$ there exists $k > 0$ such that $\map^k(i) = \bot$. If this function exists, we can simulate it by storing the explicit symbols, and for each non-explicit phrase, its starting position in $S$ and its length. This allows us to recover the original text using $\Theta(r)$ space. The reachable repetitiveness measure $b$ is defined as the size of the smallest valid bidirectional macro scheme for $S$.

The measure $b$ is lower-bounded by $\gamma$, and sometimes it can be asymptotically greater~\cite{BFIKMN21}. It is easy to show via the grammar tree of an RLSLP that $b = O(g_{rl})$, and there exist string families where $b = o(g_{rl})$~\cite{NOP20}. On the negative side, it has been proven that computing $b$ is NP-hard~\cite{Gallant1982}. Moreover, no $O(\log^c n)$ time algorithm (for any constant $c$) for retrieving a symbol $T[i]$ has been found.

\section{2D Strings and Measure \texorpdfstring{$\delta$}{Delta}}

In this section, we define a repetitiveness measure for 2D strings with the goal of extending the $\delta$ measure to the two-dimensional context. Let us first give the definition of 2D strings and substring complexity in the two-dimensional case.

Let $\Sigma$ be an alphabet. A \emph{2D string} $M_{m \times n}$ 
is a ($m \times n$)-matrix with $m$ \emph{rows} and $n$ \emph{columns} such that each element $M[i][j]$ belongs to $\Sigma$. The \emph{}{size} of $\Mmn$ is $N = mn$. Note that a position in $\Mmn$ consists now in a pair $(i_1,i_2)$, with $1\leq i_1 \leq m$ and $1\leq i_2 \leq n$. 
Throughout the paper, we assume that for each 2D string $\Mmn$ it holds that $m,n\geq 1$.
We denote by $\Sigma^{m\times n}$ the set of all matrices with $m$ rows and $n$ columns over $\Sigma$.
The concatenation between two matrices is a partial operation that can be performed horizontally ($\ohrz$) or vertically ($\ovrt$), with the constraint that the number of rows or columns coincide respectively. Such operations have been described in~\cite{GiammarresiR97} where concepts and techniques of formal languages have been generalized to two dimensions.
We denote by $\Mmn[i_1 \dd j_1][i_2 \dd j_2]$ the rectangular submatrix starting at position $(i_1, i_2)$ and ending at position $(j_1, j_2)$, and we say that a matrix $F$ is a \emph{2D factor} of $\Mmn$ if there exist two positions $(i_1, i_2)$ and $(j_1, j_2)$ such that $F = \Mmn[i_1 \dd j_1][i_2 \dd j_2]$.
Analogously to strings, given a 2D string $\Mmn$ the \emph{2D substring complexity} function $P_M$ counts for each pair of positive integers $(k_1,k_2)$ the number of distinct ($k_1\times k_2$)-factors in $\Mmn$.






\begin{definition} \label{def:2ddelta}Let $\Mmn$ be a 2D string and $P_M$ be the 2D substring complexity of $\Mmn$. Then, $\delta_{2D}(\Mmn) = \max\{P_M(k_1, k_2)/k_1k_2, 1\leq k_1\leq m, 1\leq k_2\leq n\}$.
\end{definition}

Note that, for each string $S\in\Sigma^n$, there exists a corresponding 2D string $M_{1\times n}$ such that $M_{1\times n}[1][1\dd n] = S[1\dd n]$, and one can observe that $\delta_{2D}(M_{1\times n})=\delta(S)$. Since on one-dimensional matrices (when either $m=1$ or $n=1$) the two measures coincide, from now on we will use the notation $\delta$ to denote $\delta_{2D}$.



Recently, in~\cite{CarfagnaManzini2023} Carfagna and Manzini introduced an analogous extension of $\delta$, here denoted by $\delta_\square$, limited to square 2D strings and using only square factors for substring complexity. In their definition, for each possible $k$, $\delta_\square$ counts the number of distinct square matrices of size $k \times k$ in a square 2D string $\Mnn$ divided by $k^2$ and take the maximum. Below we report the definition of such a measure, applied to a generic two-dimensional string.


\begin{definition}
    Let $\Mmn$ be a 2D string and $P_M$ be the 2D substring complexity of $\Mmn$. Then, $\delta_\square(\Mmn) = \max\{P_M(k, k)/k^2, 1\leq k\leq \min\{m,n\}\}$.
\end{definition}

From the definitions of $\delta_\square$ and $\delta$, the following Lemma easily follows.

\begin{lemma}
For every 2D string $\Mmn$ it holds that $\delta(\Mmn) \geq \delta_\square(\Mmn)$.
\end{lemma}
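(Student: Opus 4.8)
The plan is to show that every quantity appearing in the maximum that defines $\delta_\square$ also appears (after a trivial renaming) in the maximum that defines $\delta$, so the latter maximum can only be larger. Concretely, fix an arbitrary 2D string $\Mmn$ and recall that
\[
\delta_\square(\Mmn) = \max\{P_M(k,k)/k^2 : 1 \le k \le \min\{m,n\}\},
\qquad
\delta(\Mmn) = \max\{P_M(k_1,k_2)/k_1k_2 : 1 \le k_1 \le m,\ 1 \le k_2 \le n\}.
\]
Let $k^*$ be an index attaining the maximum in the definition of $\delta_\square$, so $\delta_\square(\Mmn) = P_M(k^*,k^*)/(k^*)^2$. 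The first step is to observe that $1 \le k^* \le \min\{m,n\}$ implies $1 \le k^* \le m$ and $1 \le k^* \le n$, so the pair $(k_1,k_2) = (k^*,k^*)$ is an admissible pair in the maximization defining $\delta$.

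The second step is simply to note that the term contributed by this pair to the $\delta$-maximum is exactly $P_M(k^*,k^*)/(k^* \cdot k^*) = P_M(k^*,k^*)/(k^*)^2 = \delta_\square(\Mmn)$. Since $\delta(\Mmn)$ is the maximum over a set that contains this term, we get $\delta(\Mmn) \ge \delta_\square(\Mmn)$, which is the claim. One should also check the degenerate cases: if $\min\{m,n\} = 0$ this cannot occur since we assume $m,n \ge 1$ throughout; and the constant $1$ that appears inside the maximum in Definition~\ref{def:2ddelta} only makes $\delta$ potentially larger, so it does not interfere with the inequality (indeed one always has $P_M(1,1) \ge 1$, so the explicit $1$ is harmless either way).

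There is essentially no obstacle here; the statement is a one-line containment-of-maximization-domains argument. The only thing requiring a moment's care is making sure the domain of the $\delta_\square$-maximum really is a subset (under the diagonal embedding $k \mapsto (k,k)$) of the domain of the $\delta$-maximum, which follows immediately from $\min\{m,n\} \le m$ and $\min\{m,n\} \le n$. I would therefore present the proof in two sentences: exhibit the maximizing $k^*$ for $\delta_\square$, observe $(k^*,k^*)$ is feasible for $\delta$ with the same value, and conclude.
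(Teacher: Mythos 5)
Your proof is correct and matches the paper's treatment: the paper states that the inequality ``easily follows'' from the definitions, and the intended argument is exactly your observation that the diagonal pairs $(k,k)$ with $1\le k\le\min\{m,n\}$ form a subset of the feasible pairs $(k_1,k_2)$ for $\delta$, with identical objective values. Nothing further is needed.
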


However, the two measures $\delta$ and $\delta_\square$, although may seem similar, can have quite different behavior for several families of 2D strings. 
In fact, considering square factors instead of rectangular ones may result in the two measures having very different values.
In particular, Example~\ref{ex:deltasquare_on_strings} shows how different the two measures can be when applied to non-square 2D strings, e.g. one-dimensional strings. Example~\ref{ex:Carfagna_Manzini_Lemma4} shows that there exist families of square 2D strings for which $\delta_\square=o(\delta)$.

\begin{example}
\label{ex:deltasquare_on_strings}
    Given an alphabet $\Sigma$, let us consider a string $S\in\Sigma^n$, and the matrix $M_{1\times n}\in\Sigma^{1\times n}$ such that $M_{1 \times n}[1][1\dd n] = S[1 \dd n]$. One can observe that the only squares that occur in $M_{1\times n}$ are factors of size $1\times 1$, i.e. $\delta_\square(M_{m\times 1}) = P_M(1,1)/1^2 \leq |\Sigma|$. On the other hand, $\delta(M_{1\times n}) = \delta(S)$.
\end{example}

\begin{example}\label{ex:Carfagna_Manzini_Lemma4}
Let $M_{n\times n}$ be the square 2D string defined by Carfagna and Manzini in~\cite[Lemma 4]{CarfagnaManzini2023}. The first row of $M_{n\times n}$ is a string $S$ composed by $\sqrt{n}/2$ blocks, each one of size $2\sqrt{n}$ (assume $n$ is a perfect square). The $i$-th block $B_i$ for $i \in [1\dd \sqrt{n}/2]$ is the string $\one^i\zero^{(2\sqrt{n}-i)}$. Then, $S = B_1B_2 \dd B_{\sqrt{n}/2}$. The remaining rows of $M_{n \times n}$ are all $\texttt{\#}^n$. 
Carfagna and Manzini proved that $\deltaCM = O(1)$ in this string family~\cite[Lemma 4]{CarfagnaManzini2023}. 
On the other hand, notice that for $i \in [2\dd\sqrt{n}/2]$ and $j \in [0 \dd \sqrt{n}-i]$, the strings $\zero^j\one^i\zero^{\sqrt{n}-j-i}$ are all different substrings of length $\sqrt{n}$ of $S$. In total there are $\Omega(n)$ of them. Hence, $\delta(M_{n \times n}) = \Omega(\sqrt{n})$.
\end{example}

We conclude this section by observing that approaches based on linearizations of 2D strings may have a very different behaviour with respect to 
the 2D-extension of $\delta$, as the following examples show.

Example~\ref{ex:delta_row_linearization} considers the linearization $\rowlin$ that maps a matrix $\Mmn$ to the string obtained by concatenating all the rows of the matrix, i.e. $M[1][1..n]\cdot M[2][1..n]\cdots M[m][1..n]$.

\begin{example}\label{ex:delta_row_linearization}
    Let us consider a matrix $M_{n\times n}$, obtained by appending to the identity matrix $I_{n-1}$, a row of $\zero$'s at the bottom, and then a column of $\one$'s at the right, in that order. For each $k_1$ and $k_2$, the value $P_M(k_1, k_2)$ is at most $3(k_1+ k_2 )$. We can see this by considering three cases: the submatrices that do not intersect the last row nor the last column of $M$, the submatrices aligned with the row of $\zero$'s at the bottom, and the submatrices aligned with the column of $\one$'s at the right. In each case, the distinct submatrices are associated to where the diagonal of $\one$'s intersects a submatrix (if it does so). This can happen in at most $k_1 + k_2$ different ways. As $3(k_1+k_2)/k_1k_2 \le 6$, we obtain $\delta(\Mnn)= O(1)$. On the other hand, for each $k \in [1\dd n]$ and $i \in [0\dd n-k-2]$, each factor $\zero^i\one\zero^k\one\zero^{n-k-i-2}$ appears in $\rowlin(\Mnn)$. There are $n-k-1$ of these factors for each $k$. Summing over all $k$, we obtain $P_M(n) \ge (n^2-3n)/2$, hence $P_M(n)/n = \Omega(n)$. Thus, $\delta(\rowlin(\Mnn))= \Omega(n)$.
\end{example}

Indeed, for each 2D string $\Mmn\in\Sigma^{m\times n}$ there exists a linearization which is highly compressible.
For instance, we can visit in order all the occurrences of $a_1\in\Sigma$, followed by all the occurrence of $a_2\in\Sigma$, and so on until we get the string which consists in $|\Sigma|$ equal-letter runs.
Nonetheless, a universal linearization which reduces the size of any 1D measure does not seem to exist.
Similar considerations can be made when other repetitiveness measures are used. That is, using measures defined for one-dimensional strings to assess the repetitiveness of two-dimensional strings through linearization strategies may differ from the 2D definitions of such measures. 
More detailed examples are provided in Appendix \ref{sec:linearization}.

\section{2D String Attractors}

In this section, we define an analogous to string attractors for the two-dimensional setting. 

\begin{definition}\label{def:2dstrinattractor}A \emph{2D string attractor} for a 2D string $M[1\dd m][1 \dd n]$ is a set $\Gamma \subseteq \mxn$ of positions in $\Mmn$, satisfying that any substring $M[i\dd j][k \dd l]$ has an occurrence $M[i'\dd j'][k' \dd l']$ such that $i' \le x \le j'$ and $k' \le y \le l'$ for some $(x, y) \in \Gamma$. The size of the smallest 2D string attractor for $\Mmn$ is denoted by $\gamma_{2D}(\Mmn)$.
\end{definition}

It is easy to see that, analogously to $\delta$, Definition~\ref{def:2dstrinattractor} reduces to definition of $\gamma$ on 1D strings when either $m = 1$ or $n=1$.
Hence, also in case of string attractors, we will use $\gamma$ to denote $\gamma_{2D}$.
By using this simple observation, we can conclude that the measure $\gamma$ is not monotone. Moreover, the following proposition follows.

\begin{proposition}Let $\Mmn$ be a 2D string. Computing $\gamma(\Mmn)$ is NP-hard.
\end{proposition}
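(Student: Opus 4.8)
The plan is to reduce from the known NP-hardness of computing $\gamma$ on one-dimensional strings, which is stated earlier in the excerpt (it follows from the NP-hardness of computing the smallest string attractor~\cite{KP18}). The key observation, already made right after Definition~\ref{def:2dstrinattractor}, is that for a one-row matrix $M_{1\times n}$ the notion of 2D string attractor collapses exactly to the 1D notion: a set $\Gamma\subseteq[1\dd 1]\times[1\dd n]$ is a 2D string attractor of $M_{1\times n}$ if and only if $\{y : (1,y)\in\Gamma\}$ is a 1D string attractor of the string $S=M_{1\times n}[1][1\dd n]$, and the two sets have the same cardinality. Hence $\gamma_{2D}(M_{1\times n})=\gamma(S)$.

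First I would spell out this correspondence carefully: given an instance $S[1\dd n]$ of the 1D problem, build in linear time the 2D string $M_{1\times n}$ with $M_{1\times n}[1][j]=S[j]$. Every 2D factor $M[1\dd 1][k\dd l]$ of $M_{1\times n}$ is just the 1D factor $S[k\dd l]$, and an occurrence $M[1\dd 1][k'\dd l']$ covered by some $(x,y)\in\Gamma$ forces $x=1$ and $k'\le y\le l'$, which is precisely the 1D covering condition. So the decision problem ``is $\gamma_{2D}(\Mmn)\le t$?'' restricted to one-row inputs is literally the decision problem ``is $\gamma(S)\le t$?''. Since the latter is NP-hard and the reduction (identity embedding of $S$ into a one-row matrix, together with the threshold $t$) is trivially polynomial-time, computing $\gamma_{2D}$ is NP-hard.

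There is essentially no hard step here; the only thing to be a little careful about is making sure the reduction is stated as a many-one reduction between decision problems (or, if one prefers, a Turing reduction for the optimization version), and that the equality $\gamma_{2D}(M_{1\times n})=\gamma(S)$ is justified in both directions rather than just asserted. One should also note explicitly that this argument does not even need the full strength of Definition~\ref{def:2ddelta} or any of the $\delta$ material — it is purely a consequence of the attractor definition specializing correctly — so the proof is short and self-contained modulo the cited 1D result. If desired, one could alternatively reduce from the smallest-SLP or $b$ hardness, but the string-attractor route is the most direct since the 2D object is a verbatim generalization of the 1D one.

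Thus the whole proof is: embed a 1D string as a single-row 2D string, observe the measures coincide, and invoke the NP-hardness of computing $\gamma$ on 1D strings. The ``obstacle,'' such as it is, is merely bookkeeping — confirming that the covering condition in Definition~\ref{def:2dstrinattractor} degenerates exactly as claimed when $m=1$, which we already argued above when noting that $\gamma_{2D}$ reduces to $\gamma$ and hence is not monotone.
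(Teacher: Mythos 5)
Your proof is correct and is exactly the argument the paper intends: the proposition is stated to ``follow'' from the observation that Definition~\ref{def:2dstrinattractor} collapses to the 1D string-attractor definition when $m=1$, so the NP-hardness of computing $\gamma$ on 1D strings~\cite{KP18} transfers via the one-row embedding. Your write-up just makes explicit the bookkeeping that the paper leaves implicit.
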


By using similar arguments to the one-dimensional case, the following proposition can be proved.

\begin{proposition}
For every 2D string $\Mmn$, it holds that that $\delta(M)\leq \gamma(M)$.   
\end{proposition}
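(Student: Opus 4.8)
The plan is to mimic the standard one-dimensional argument showing $\delta \le \gamma$, now in the rectangular 2D setting. Fix a smallest 2D string attractor $\Gamma$ of $\Mmn$ with $|\Gamma| = \gamma(\Mmn)$. The goal is to show that for every pair $(k_1,k_2)$ we have $P_M(k_1,k_2) \le k_1 k_2 \cdot \gamma(\Mmn)$, since then dividing by $k_1 k_2$ and taking the maximum over all $(k_1,k_2)$ yields $\delta(\Mmn) \le \gamma(\Mmn)$ directly from Definition~\ref{def:2ddelta}.

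The key step is an injective-mapping (charging) argument. Let $F$ be an arbitrary distinct $(k_1 \times k_2)$-factor of $\Mmn$. By the definition of a 2D string attractor, $F$ has some occurrence $M[i'\dd j'][k'\dd l']$ (with $j' - i' + 1 = k_1$ and $l' - k' + 1 = k_2$) that is ``crossed'' by some attractor position $(x,y) \in \Gamma$, meaning $i' \le x \le j'$ and $k' \le y \le l'$. I would assign to $F$ the triple consisting of the attractor element $(x,y)$ together with the \emph{relative offset} of that element inside the occurrence, namely $(x - i', \, y - k') \in [0\dd k_1-1]\times[0\dd k_2-1]$. The claim is that this assignment $F \mapsto \big((x,y),\, (x-i',\,y-k')\big)$ is injective over the set of distinct $(k_1\times k_2)$-factors. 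Indeed, if two distinct factors $F$ and $F''$ were assigned the same attractor element $(x,y)$ and the same offset $(a,b)$, then each of $F$ and $F''$ has an occurrence whose top-left corner is exactly $(x-a,\,y-b)$ — the same corner, and the same dimensions $(k_1,k_2)$ — so both would be equal to the submatrix $M[x-a\dd x-a+k_1-1][y-b\dd y-b+k_2-1]$, contradicting $F \ne F''$. Hence the number of distinct $(k_1\times k_2)$-factors is at most the number of pairs (attractor element, offset), which is $\gamma(\Mmn)\cdot k_1 k_2$.

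Finally I would conclude: $P_M(k_1,k_2) \le k_1 k_2\,\gamma(\Mmn)$ for all valid $(k_1,k_2)$, so $P_M(k_1,k_2)/(k_1 k_2) \le \gamma(\Mmn)$, and taking the maximum over $1 \le k_1 \le m$, $1 \le k_2 \le n$ gives $\delta(\Mmn) \le \gamma(\Mmn)$, as required.

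The main obstacle is essentially bookkeeping rather than a deep difficulty: one must be careful that the offset is taken with respect to the \emph{specific} crossing occurrence guaranteed by the attractor (not the factor's ``canonical'' occurrence), and that the reconstruction of $F$ from $(x,y)$ and $(a,b)$ is unambiguous, including checking that the recovered corner $(x-a,\,y-b)$ indeed lies within $\mxn$ and that the $(k_1\times k_2)$ window fits inside $\Mmn$ — both of which follow because $(x,y)$ lies inside a genuine occurrence of $F$. The rest is identical in spirit to the 1D proof, and the rectangular (as opposed to square) shape plays no special role here.
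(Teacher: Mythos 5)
Your proposal is correct and follows exactly the route the paper takes: the paper's proof consists of the single observation that $P_M(k_1,k_2)\leq k_1 k_2\,\gamma(M)$, and your charging argument (mapping each distinct $(k_1\times k_2)$-factor injectively to a pair of an attractor position and an offset within a crossed occurrence) is the standard justification of that inequality. No gaps; you have simply spelled out the details the paper leaves implicit.
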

\begin{proof}
  It follows from the fact that $P_M(k_1,k_2)\leq k_1 k_2 \gamma(M)$, for any 2D string $M$.  
\end{proof}
The following proposition, similarly to what happens for 1D strings, shows that $\delta$ can be $o(\gamma)$. The proposition also shoes that in the 2D context the gap between $\delta$ and $\gamma$ can be larger that the one-dimensional case, where it is logarithmic~\cite{Delta}.

\begin{proposition}\label{prop:gamma_odelta} There exist a 2D string family where $\gamma(\Mmn) = \Omega(\delta(\Mmn) \max(m,n))$.
\end{proposition}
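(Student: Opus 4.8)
The goal is to exhibit a family of 2D strings where $\gamma$ blows up by a factor of $\max(m,n)$ relative to $\delta$. The natural idea is to take a 1D string family that already separates $\gamma$ from $\delta$ and ``lift'' it into two dimensions in a way that keeps $\delta$ small while forcing $\gamma$ to grow linearly in one side length. The cleanest route is to reuse a known 1D witness for $\delta = o(\gamma)$ — concretely a string $w$ of length $n$ with $\delta(w) = O(1)$ but $\gamma(w) = \Omega(\log n)$ — and stack $n$ (or $m$) shifted/related copies of such witnesses so that the 2D substring complexity stays controlled while each row (or a large independent set of positions) is forced to contribute to any attractor.

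**Steps.**
First I would fix the dimension to be (say) square, $m = n = N^{1/2}$, and describe the construction: build a matrix $\Mnn$ whose rows are cyclic shifts or carefully chosen variants of a single highly structured 1D string, chosen so that the set of distinct $k_1 \times k_2$ 2D factors is small — ideally $P_M(k_1,k_2) = O(k_1 k_2)$ or $O(k_1 k_2 \cdot \mathrm{polylog})$ — which immediately gives $\delta(\Mnn) = O(1)$ or $O(\mathrm{polylog})$ by Definition~\ref{def:2ddelta}. The counting argument here mirrors the diagonal-intersection style arguments already used in Example~\ref{ex:delta_row_linearization}: classify 2D factors by how they align with the ``structured'' part of the matrix. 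Second, I would prove the attractor lower bound: identify $\Omega(n)$ distinct 1D substrings, each appearing in exactly one known location within a single row (or within a narrow horizontal band), so that any $(x,y) \in \Gamma$ can ``cover'' only $O(1)$ of them. Summing, $\gamma(\Mnn) = \Omega(n) = \Omega(\max(m,n))$. Combined with $\delta(\Mnn) = O(1)$ this yields $\gamma(\Mnn) = \Omega(\delta(\Mnn)\max(m,n))$; if the construction only achieves $\delta = O(\mathrm{polylog})$, the statement as written ($\Omega(\delta \cdot \max(m,n))$ with $\delta$ in the bound) still follows, but I would aim for $\delta = O(1)$ to make it crisp.

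**Main obstacle.**
The delicate point is balancing the two requirements simultaneously: making the rows ``different enough'' that an attractor needs $\Omega(n)$ positions, while making them ``similar enough'' that the 2D substring complexity does not explode. A single column of distinguished symbols (as in Example~\ref{ex:delta_row_linearization}) forces only $O(1)$ attractor positions, not $\Omega(n)$; conversely, $n$ genuinely unrelated rows would blow up $\delta$ to $\Omega(n)$ as well, killing the gap. The trick will be to use a family of substrings that are pairwise distinct but built from a common low-complexity template — e.g.\ the row $i$ contains a unique ``marker'' factor like $\zero^{c} \one \zero^{i} \one \zero^{c'}$ placed at a fixed location, so that these $\Omega(n)$ markers are all distinct (forcing one attractor position each) yet every $k_1 \times k_2$ window sees at most $O(k_1+k_2)$ distinct configurations of markers, keeping $P_M(k_1,k_2) = O(k_1 k_2)$. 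Verifying this upper bound on $P_M$ carefully — including windows that straddle several rows — is the part that requires the most attention, and I would organize it by the same case split (windows disjoint from the markers; windows meeting one marker; windows meeting the marker band) used throughout the paper.
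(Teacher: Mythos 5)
Your high-level plan has the right shape and, in spirit, matches what the paper does: keep $P_M(k_1,k_2)=O(k_1 k_2)$ so that $\delta=O(1)$, while arranging $\Omega(\max(m,n))$ pairwise-distinct substrings that each occur exactly once and hence each force their own attractor position. The paper's witness is simply the identity matrix $I_m$: every row (and column) is a distinct $1\times m$ factor occurring once, so $\gamma(I_m)=m+1$, while a $k_1\times k_2$ window of $I_m$ is determined by where (if at all) the diagonal of $\one$'s meets its boundary, giving at most $k_1+k_2$ distinct factors of each shape and thus $\delta(I_m)=O(1)$.

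The genuine gap in your proposal is that the one concrete construction you offer would not work. If row $i$ carries a marker of the form $\zero^{c}\one\zero^{i}\one\zero^{c'}$, then the $1\times k_2$ factors already include all strings $\zero^{a}\one\zero^{i}\one\zero^{k_2-i-a-2}$ for the admissible pairs $(i,a)$; summing over $i$ gives $\Theta(k_2^2)$ distinct factors of shape $1\times k_2$, so $P_M(1,k_2)/k_2=\Omega(k_2)$ and $\delta(\Mmn)=\Omega(n)$, destroying the gap. This is precisely the mechanism exploited in Example~\ref{ex:Carfagna_Manzini_Lemma4} and Example~\ref{ex:delta_row_linearization} to make $\delta$ \emph{large}, so your claim that such a window sees only $O(k_1+k_2)$ marker configurations is false for two $\one$'s at a row-dependent gap. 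The fix is to use a single distinguished symbol per row whose position varies linearly with the row index (i.e.\ the diagonal of $I_m$): then a window's content depends only on the single entry point of the diagonal, which can vary in only $O(k_1+k_2)$ ways. Your opening suggestion of lifting a 1D witness with $\gamma=\Omega(\log n)$ is also a red herring: the $\Omega(\max(m,n))$ separation does not come from any 1D $\delta$-versus-$\gamma$ gap, but from the purely 2D phenomenon that $m$ disjoint, pairwise-distinct, once-occurring rows each demand an attractor position while contributing only $O(1)$ to the normalized 2D substring complexity.
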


\begin{proof}Consider the 2D string $I_{m}$ defined as the identity matrix of order $m$. 
One 2D string attractor for this string is the set $\Gamma = \{(1,1) \dd (m,m)\} \cup \{(1, m)\}$. This is because a substring either contains $\one$'s, all of them lying at an attractor position, or consists of only $\zero$'s, and it has an occurrence aligned with the $\zero$ at the upper-right corner of the string. Observe that any row and column has to contain an attractor position because all rows and columns are distinct. The only way to satisfy this constraint using $m$ positions is marking all the $\one$'s, but then it is still necessary to mark a $\zero$ for the substrings of $I_{m}$ that do not contain $\one$'s. Thus, the string attractor above has minimum size and $\gamma(I_{m}) = m+1$.

On the other hand, there exist at most $k_1+k_2$ distinct substrings of size $k_1 \times k_2$ in $I_m$; $k_1+k_2 -1$ correspond to substrings where the diagonal of $I_m$ touches ones of its left or upper borders; the last one is the string of only $\zero$'s.  Hence, $\delta = O(1)$ in this string family.
\end{proof}

Analogously to what we have seen in the previous section, Carfagna and Manzini~\cite{CarfagnaManzini2023} introduced a definition of string attractors for square 2D strings in which they consider only square factors. We can define such a measure, denoted by $\gamma_\square$, also for generic 2D strings, by simply considering square substrings in Definition~\ref{def:2dstrinattractor}.
From the definitions of $\gamma$ and $\gamma_\square$, the following relationship between these measures hold. 

\begin{lemma}
  For every 2D string $\Mmn$ it holds that $\gamma(\Mmn) \geq \gamma_\square(\Mmn)$.  
\end{lemma}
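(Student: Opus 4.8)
The plan is to show that any 2D string attractor $\Gamma$ for $\Mmn$ (in the sense of Definition~\ref{def:2dstrinattractor}) is automatically a valid attractor in the square-restricted sense used to define $\gamma_\square$, from which $\gamma(\Mmn) \geq \gamma_\square(\Mmn)$ follows by minimality. Indeed, $\gamma_\square(\Mmn)$ is the size of the smallest set $\Gamma' \subseteq \mxn$ such that every \emph{square} factor $M[i\dd j][k\dd l]$ (with $j-i = l-k$) has an occurrence crossing a position of $\Gamma'$; the full attractor condition quantifies over \emph{all} rectangular factors, which in particular includes all square factors.

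Concretely, I would first fix a minimum-size 2D string attractor $\Gamma$ for $\Mmn$, so $|\Gamma| = \gamma(\Mmn)$. Then I would take an arbitrary square factor $F = M[i\dd j][k\dd l]$ of $\Mmn$ with $j - i = l - k$. Since $F$ is in particular a rectangular factor, Definition~\ref{def:2dstrinattractor} guarantees an occurrence $F = M[i'\dd j'][k'\dd l']$ and a position $(x,y) \in \Gamma$ with $i' \le x \le j'$ and $k' \le y \le l'$. This is exactly the condition required for $\Gamma$ to be a square-attractor. Hence $\Gamma$ witnesses $\gamma_\square(\Mmn) \le |\Gamma| = \gamma(\Mmn)$, which is the claim.

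There is essentially no hard part here: the inequality is immediate from the fact that the square version imposes its covering requirement over a strictly smaller class of factors (squares $\subseteq$ rectangles), so every attractor for the stronger (rectangular) condition is an attractor for the weaker (square) one, and the smallest such set can only shrink. The only thing to be careful about is the degenerate boundary behaviour (e.g., when $\min\{m,n\} = 1$ the only squares are single cells, and the statement is trivially true since $\gamma_\square$ is then bounded by $|\Sigma|$), but this does not require a separate argument because the general reduction already covers it. For uniformity with the analogous statement for $\delta$ and $\delta_\square$ proved earlier, I would phrase the proof in one or two sentences, just noting that square factors form a subset of all 2D factors.

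\begin{proof}
Let $\Gamma$ be a 2D string attractor for $\Mmn$ of minimum size, so $|\Gamma| = \gamma(\Mmn)$. Every square factor of $\Mmn$ is in particular a rectangular factor, so by Definition~\ref{def:2dstrinattractor} each square factor $M[i\dd j][k\dd l]$ has an occurrence $M[i'\dd j'][k'\dd l']$ with $i' \le x \le j'$ and $k' \le y \le l'$ for some $(x,y) \in \Gamma$. Hence $\Gamma$ is also a 2D string attractor in the square-restricted sense, and therefore $\gamma_\square(\Mmn) \le |\Gamma| = \gamma(\Mmn)$.
\end{proof}
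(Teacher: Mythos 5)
Your proof is correct and matches the paper's approach exactly: the paper states this lemma as an immediate consequence of the definitions (every square factor is a rectangular factor, so any 2D string attractor is also a square-restricted attractor), which is precisely the argument you give.
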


As the following example shows, these two measures can be asymptotically different.

\begin{example}
    Consider once again the $m$th order identity matrix $I_m$. 
    Observe that, for each $k\leq m$, a $k\times k$ square factor of $I_m$ either consists of i) all 0's, or ii) all 0's except only one diagonal composed by 1's.
    Hence, all 2D square factors falling in i) have an occurrence that includes position $(m,1)$ (i.e. the bottom left corner), while all those falling in ii) have an occurrence that includes the position $(\lfloor m/2 \rfloor, \lfloor m/2 \rfloor)$ (i.e. the 1 at the center).
    It follows that $\gamma_\square(I_m) = 2 \in O(1)$, while as already showed in the proof of Proposition~\ref{prop:gamma_odelta} one has that $\gamma(I_m) = \Theta(m)$.
\end{example}

Nonetheless, we want to remark once again that considering only square factors could be a weak definition for the general case, since also in this case for each $\Mmn\in \Sigma^{m\times n}$ with either $m=1$ or $n=1$, it holds that $\gamma_\square(M)\leq |\Sigma|$.

\section{(Run-Length) Straight-Line Programs for 2D Strings}

We define a generalization of SLPs suitable for the two-dimensional space and introduce a new measure of repetitiveness for 2D texts based on it. Although this measure is NP-hard to compute, we can support direct access to an arbitrary cell of the 2D text in $O(h)$ time, where $h$ is the height of the parse tree, using linear space with respect to the measure. Note that we the balancing procedure of Ganardi et al.~\cite{GJL2021} can be easily adapted, so $h = O(\log N)$, where $N$ is the size of the 2D string.

\begin{definition}
Let $\Mmn$ be a 2D string. A \emph{2-dimensional straight-line program} (2D SLP) for $\Mmn$ is a context-free grammar $(V, \Sigma, R, S)$ that uniquely generates $\Mmn$ and where the definition of the right-hand side of a variable can have the form $$A \rightarrow a,\, A \rightarrow B\ohrz C, \text{ or } A  \rightarrow B \ovrt C,$$ where $a \in \Sigma$, $B, C \in V$. We call these definitions \emph{terminal rules}, \emph{horizontal rules}, and \emph{vertical rules}, respectively. The expansion of a variable is defined as $$\gexp(A) = a,\, \gexp(A) = \gexp(B)\ohrz \gexp(C), \text{ or } \gexp(A) = \gexp (B) \ovrt \gexp(C),$$ respectively.

The measure $g_{2D}(\Mmn)$ is defined as the size of the smallest 2D SLP generating $\Mmn$.  
\end{definition}

It is easy to see that $g_{2D}$ coincides with $g$, when one-dimensional strings are considered. So, from now on we use $g$ to denote $g_{2D}$.

\begin{proposition}
It always holds that $g(\Mmn) = \Omega(\log (mn))$. 
\end{proposition}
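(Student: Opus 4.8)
The plan is to mirror the classical argument showing that an SLP generating a string of length $n$ must have size $\Omega(\log n)$, adapting it to the two-dimensional setting where the "length" of the object produced by a variable is replaced by its number of cells. First I would argue that each variable $A$ of a 2D SLP expands to a rectangular 2D string $\gexp(A)$, so it makes sense to speak of its size $|\gexp(A)| = (\text{rows of } \gexp(A)) \cdot (\text{cols of }\gexp(A))$. A terminal rule gives a variable of size $1$. For a horizontal rule $A \to B \ohrz C$, the rows of $\gexp(A)$ equal those of $\gexp(B)$ and of $\gexp(C)$, and the columns add, so $|\gexp(A)| \le 2\max(|\gexp(B)|,|\gexp(C)|)$; symmetrically for vertical rules. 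Hence the size of the expansion of any variable is at most twice the maximum size of the expansions of the (at most two) variables on its right-hand side.

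Next I would set up the counting. Order the $g$ variables $A_1 = S, A_2, \dots, A_g$ consistently with the partial order of the grammar, and let $s_i = |\gexp(A_i)|$. By the observation above, $s_i \le 2 \max_{A_j \in \rhs(A_i)} s_j$, and the terminal variables have $s_i = 1$. An easy induction on the grammar DAG (using that there are no cycles) then gives $s_i \le 2^{d_i}$, where $d_i$ is the height of the subtree of the grammar tree rooted at $A_i$; in particular $N = mn = s_1 = |\gexp(S)| \le 2^{d_1} \le 2^{g}$ since the height of the grammar tree is at most the number of variables. Taking logarithms yields $g \ge \log_2(mn)$, which is the claimed $g(\Mmn) = \Omega(\log(mn))$.

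Alternatively — and perhaps cleaner to write — I would phrase it as a direct recursion without trees: define $f(g)$ to be the maximum size of a 2D string generated by a 2D SLP with at most $g$ variables. Then $f(1) = 1$ (only a terminal rule is possible) and $f(g) \le 2 f(g-1)$, because the start variable contributes one variable and its two children together use at most $g-1$ distinct variables, each of whose expansions has size at most $f(g-1)$, while the start rule at most doubles the larger of the two. Unrolling gives $f(g) \le 2^{g-1}$, so $mn \le 2^{g-1}$ and $g \ge 1 + \log_2(mn) = \Omega(\log(mn))$.

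The only subtle point — and the step I would be most careful about — is the inequality $|\gexp(A)| \le 2\max(|\gexp(B)|,|\gexp(C)|)$ for a rule $A \to B \ohrz C$: one must check that the concatenation constraint (equal number of rows) forces the growth to be sub-multiplicative rather than multiplicative in both dimensions. Indeed, if $\gexp(B)$ is $r \times p$ and $\gexp(C)$ is $r \times q$, then $\gexp(A)$ is $r \times (p+q)$, so $|\gexp(A)| = r(p+q) = |\gexp(B)| + |\gexp(C)| \le 2\max(|\gexp(B)|,|\gexp(C)|)$, and symmetrically for vertical rules; so the doubling bound is in fact exact in the worst case and the argument goes through unchanged. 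Everything else is a routine induction, so I do not expect any real obstacle.
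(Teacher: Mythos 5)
Your proof is correct and follows essentially the same route as the paper's: both rest on the observation that each (horizontal or vertical) concatenation rule at most doubles the number of cells, so a 2D SLP with $g$ variables can generate at most $2^g$ cells, forcing $g \ge \log_2(mn)$. You simply spell out more carefully the key point the paper states in one line, namely that the concatenation constraint makes the cell counts add rather than multiply.
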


\begin{proof}
From the starting variable $S$, each substitution step can double the size of the current 2D string of variables. Hence, an 2D SLP of $g$ rules can produce a 2D string of size at most $2^g$. Therefore, a string of size $N = mn$ needs a grammar of size at least $\log_2 N$.
\end{proof}

\begin{proposition}
The problem of determining if there exists a 2DSLP of size at most $k$ generating a text $\Mmn$ is NP-complete.
\end{proposition}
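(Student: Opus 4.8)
The plan is to prove NP-completeness by (i) establishing membership in NP and (ii) giving a polynomial-time reduction from a known NP-hard problem, most naturally the 1D smallest grammar problem, whose decision version ``is there an SLP of size at most $k$ generating string $T$?'' is NP-complete by Charikar et al.~\cite{Charikar2005}.

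Membership in NP is the easy direction: a 2D SLP of size at most $k$ is a certificate of size polynomial in $k$ (hence in the input, since we may assume $k \le |V| = O(mn)$ for any useful instance, and in any case we can reject instances with $k$ larger than the trivial bound), and one can verify in polynomial time that it is a valid 2D SLP — check that each rule has one of the three allowed forms, that the variable order is acyclic, and that dimensions are compatible at every horizontal/vertical composition — and that it expands to $\Mmn$ by computing $\gexp(S)$ bottom-up (each $\gexp$ has size at most $N = mn$, so this is polynomial). I would phrase the verification so that it also catches the case where $\ohrz$ or $\ovrt$ is applied to incompatible shapes, returning ``invalid'' then.

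For NP-hardness I would reduce from the 1D smallest grammar decision problem. Given a string $T[1\dd n]$ over alphabet $\Sigma$ and a target $k$, map it to the $1\times n$ 2D string $M_{1\times n}$ with $M_{1\times n}[1][1\dd n] = T$ and the same target $k$. The key claim is that the smallest 2D SLP for $M_{1\times n}$ has the same size as the smallest 1D SLP for $T$. One inequality is immediate: a 1D SLP for $T$ is directly a 2D SLP for $M_{1\times n}$ (read every binary rule $A\to BC$ as the horizontal rule $A\to B\ohrz C$), so $g(M_{1\times n}) \le g(T)$; this is exactly the observation already stated in the excerpt that ``$g_{2D}$ coincides with $g$ when one-dimensional strings are considered,'' but for the reduction I need the nontrivial converse, namely that no 2D SLP for a $1\times n$ string can be smaller than the smallest 1D SLP. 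The main obstacle is therefore ruling out that vertical rules help: I must argue that in any 2D SLP generating a single-row string, every variable expands to a $1\times \ell$ matrix (equivalently a 1D string), since any vertical composition $B\ovrt C$ would produce a matrix with at least two rows, and such a variable could never appear (even indirectly) in a derivation of a one-row string — formally, by induction on the grammar order, the starting variable has one row, and a one-row variable can only be defined by a terminal rule or by a horizontal rule whose two children also have one row, so vertical rules are entirely unused and useless. Hence the 2D SLP, after deleting unreachable variables, is literally a 1D SLP for $T$ of the same size, giving $g(T) \le g(M_{1\times n})$ and thus equality.

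Combining the two parts yields NP-completeness. I would also remark that one could alternatively reduce from the (two-dimensional or higher) versions or from more robust NP-hard sources, but the $1\times n$ embedding is the cleanest and reuses a result already cited. A minor point to handle cleanly is degenerate inputs (e.g.\ $k$ smaller than $\lceil\log_2 n\rceil$, which by the preceding proposition are trivially ``no'' instances, or $k$ so large the answer is trivially ``yes''), but these only make the reduction easier and can be dispatched in a sentence.
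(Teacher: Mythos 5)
Your proof is correct and follows exactly the paper's route: membership in NP plus a reduction from the 1D smallest-grammar decision problem of Charikar et al.\ by viewing a string as a $1\times n$ matrix. The paper states this in two sentences and leaves implicit the converse you spell out (that vertical rules are useless for one-row strings, so the optimal 2D SLP size equals the optimal 1D SLP size); your filling in of that detail is a welcome but not divergent elaboration.
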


\begin{proof}Clearly, the problem belongs to NP. Observe that the 1D version of the problem, known to be NP-complete~\cite{Charikar2005}, reduces to the 2D version by considering 1D strings as matrices of size $1 \times n$. 
\end{proof}

\begin{proposition}
Let $\Mmn$ be a 2D string. There exists a data structure using $O(g)$ space that supports direct access to any cell $M[i][j]$ in $O(h)$ time.
\end{proposition}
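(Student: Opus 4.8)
The plan is to adapt the standard one-dimensional technique for direct access on (balanced) SLPs to the two-dimensional setting. First I would store, for every variable $A$ of the 2D SLP, the dimensions $(m_A, n_A)$ of its expansion $\gexp(A)$, i.e. the number of rows and columns of the rectangular matrix it derives. These can be computed bottom-up in $O(g)$ time: a terminal rule gives $(1,1)$; a horizontal rule $A \to B \ohrz C$ gives $(m_B, n_B + n_C)$ (and is only well-formed when $m_B = m_C$); a vertical rule $A \to B \ovrt C$ gives $(m_B + m_C, n_B)$ (well-formed when $n_B = n_C$). Storing these values costs $O(g)$ space, since each fits in $O(\log N)$ bits.

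Next, given a query position $(i,j)$ with $1 \le i \le m$ and $1 \le j \le n$, I would descend from the start variable $S$, maintaining the invariant that the current variable $A$ derives the submatrix of $\Mmn$ that still contains the target cell, together with local coordinates $(i', j')$ of the target within $\gexp(A)$ (initially $(i,j)$ with $A = S$). At a terminal rule $A \to a$ we must have $(i',j') = (1,1)$ and we return $a$. At a horizontal rule $A \to B \ohrz C$: if $j' \le n_B$ we recurse into $B$ with coordinates $(i', j')$, otherwise we recurse into $C$ with coordinates $(i', j' - n_B)$; the row coordinate is unchanged. At a vertical rule $A \to B \ovrt C$: if $i' \le m_B$ we recurse into $B$ with $(i', j')$, otherwise into $C$ with $(i' - m_B, j')$. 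Each step does $O(1)$ work and moves one level down in the parse tree, so the query takes $O(h)$ time, where $h$ is the height of the parse tree. A short induction on the structure of the derivation shows the invariant is maintained and the returned symbol equals $M[i][j]$.

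The verification that the dimension bookkeeping is consistent — that every horizontal rule indeed joins two blocks of equal height and every vertical rule joins two blocks of equal width — is exactly the condition that the grammar "uniquely generates $\Mmn$", so it is guaranteed by the definition of a 2D SLP; I would state this explicitly but it needs no real argument. The only mildly delicate point, and the one I would treat most carefully, is the base/consistency claim that the submatrix associated with the current variable during the descent is literally a 2D factor of $\Mmn$ occurring at a fixed offset, so that "the target cell lies in $\gexp(A)$" is well-defined throughout; this follows because each expansion step partitions a rectangular region into two rectangular subregions by a horizontal or vertical cut, and the target cell lies in exactly one part. I do not expect any genuine obstacle here: the argument is a routine lift of the 1D case, the extra ingredient being merely that each internal rule now carries a binary choice in one of the two coordinate axes rather than always in the single axis. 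Finally I would remark, as already noted before the statement, that applying the balancing procedure of Ganardi et al.~\cite{GJL2021} (adapted to the two kinds of concatenation) makes $h = O(\log N)$ without increasing the size beyond $O(g)$, giving $O(\log N)$-time access.
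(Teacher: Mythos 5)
Your proposal is correct and matches the paper's own proof: the authors likewise augment each rule with the dimensions $(m_A,n_A)$ of its expansion and descend the parse tree with one $O(1)$-time comparison per level (their Algorithm~\ref{alg:2D-SLP-direct-access}), invoking the balancing of Ganardi et al.\ for $h=O(\log N)$. No differences worth noting.
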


\begin{proof}
Let $G$ be a 2D SLP generating $\Mmn$. We enrich each rule $A \rightarrow BC$ with the number of rows $m_A$ and the number of columns $n_A$ of its expansion. The procedure to access $\gexp(A)[i][j]$ is detailed in Algorithm~\ref{alg:2D-SLP-direct-access} shown in the Appendix~\ref{sec:access}.
\end{proof}


We can extend 2D SLPs with \emph{run-length rules} to obtain more powerful grammars, retaining most of the nice properties that 2D SLPs have.

\begin{definition}
A \emph{2-dimensional run-length straight-line program} (2D RLSLP) is a 2D SLP that in addition allows special rules of the form

$$A \rightarrow  \ohrz^k B  \text{ and } A \rightarrow  \ovrt^k B$$

for $k > 1$, with their expansions defined as 

\begin{align*}
    \gexp(A) = \gexp( \underbrace{B\ohrz B \ohrz \cdots \ohrz B}_{k \text{ times}}) & \text{ \quad \quad \quad and } & \gexp(A) = \gexp( \underbrace{B\ovrt B \ovrt \cdots \ovrt B}_{k \text{ times}})\\
\end{align*}
respectively. The measure $g_{rl2D}(\Mmn)$ is defined as the size of the smallest 2D RLSLP generating a text $\Mmn$.    
\end{definition}

\begin{proposition}It always hold that $g_{rl} = O(g)$. Moreover, there are string families where $g_{rl} = o(g)$.
\end{proposition}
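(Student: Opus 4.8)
The claim has two parts. For the first part, $g_{rl} = O(g)$, the plan is simply to observe that every 2D SLP is by definition also a 2D RLSLP, since the run-length rules $A \rightarrow \ohrz^k B$ and $A \rightarrow \ovrt^k B$ are an additional option, not a replacement for the ordinary horizontal and vertical rules. Hence the smallest 2D RLSLP cannot be larger than the smallest 2D SLP, giving $g_{rl}(\Mmn) \leq g(\Mmn)$ for every 2D string, and in particular $g_{rl} = O(g)$.

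For the second part I would exhibit an explicit family of 2D strings where $g_{rl} = o(g)$. The cleanest choice is to borrow the classical 1D separation and lift it: consider the family of $1 \times n$ matrices $M_{1 \times n}$ whose single row is $\zero^n$. A 2D RLSLP can generate this with a constant number of rules, e.g. $A \rightarrow \zero$ and $S \rightarrow \ohrz^n A$, so $g_{rl}(M_{1\times n}) = O(1)$. On the other hand, since $g_{2D}$ coincides with $g$ on one-dimensional strings (as noted in the excerpt right after the definition of $g_{2D}$), and it is classical that the smallest SLP for $\zero^n$ has size $\Theta(\log n)$ — this also follows from the proposition proved just above, which gives $g(M_{1\times n}) = \Omega(\log n)$ — we get $g(M_{1\times n}) = \Theta(\log n)$. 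Therefore $g_{rl} = O(1) = o(\log n) = o(g)$ on this family, and since $N = mn = n$ here, the separation is polylogarithmic in $N$.

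The argument is essentially immediate once one recognizes that the 2D definitions degenerate to the 1D ones on single-row matrices, so there is no real obstacle; the only thing to be careful about is to confirm that the lower bound $g = \Omega(\log n)$ for $\zero^n$ is available. It is: the doubling argument in the preceding proposition shows any 2D SLP of size $g$ expands to a 2D string of size at most $2^g$, hence $g(M_{1\times n}) \geq \log_2 n$. One could alternatively stay genuinely two-dimensional and use the family $M_{n \times n} = \zero^{n \times n}$, where a 2D RLSLP of size $O(1)$ (one rule for $\zero$, one $\ohrz^n$ rule, one $\ovrt^n$ rule) suffices while any 2D SLP needs $\Omega(\log(n^2)) = \Omega(\log n)$ rules; this makes the gap $g_{rl} = o(g)$ hold for square inputs as well. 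Either instantiation completes the proof.
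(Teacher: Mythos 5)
Your proposal is correct and follows essentially the same route as the paper: the first claim is immediate because every 2D SLP is a 2D RLSLP, and the separation is obtained from the family of $1 \times n$ unary matrices (the paper uses $\one^n$ where you use $\zero^n$, an immaterial difference), with the $\Omega(\log n)$ lower bound supplied by the preceding doubling argument. Your extra detail on the lower bound and the square variant $\zero^{n\times n}$ are fine elaborations but not a different method.
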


\begin{proof}First claim is trivial by definition. Second claim is derived by considering the family of $1 \times n$ matrices with first row $\one^n$ as 1D strings.
\end{proof}

Analogously to the strategy described in~\cite{BilleLRSSW15}, the following result can be obtained. The full procedure is described in Algorithm~\ref{alg:2D-SLP-direct-access} in the the Appendix~\ref{sec:access}.

\begin{proposition}
Let $\Mmn$ be a 2D string. There exists a data structure using $O(g_{rl})$ space that supports direct access to any cell $M[i][j]$ in $O(h)$ time.
\end{proposition}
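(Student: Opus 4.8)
The approach mirrors the 2D SLP case, augmenting the grammar representation with dimension information per variable and then recursing through the parse tree of the 2D RLSLP. First I would store, for each variable $A$, the pair $(m_A, n_A)$ giving the number of rows and columns of $\gexp(A)$; these values are computed bottom-up in $O(g_{rl})$ time from the rules: a terminal rule gives $(1,1)$, a horizontal rule $A \rightarrow B \ohrz C$ gives $(m_B, n_B + n_C)$ (with $m_B = m_C$), a vertical rule $A \rightarrow B \ovrt C$ gives $(m_B + m_C, n_B)$, a horizontal run-length rule $A \rightarrow \ohrz^k B$ gives $(m_B, k\cdot n_B)$, and a vertical run-length rule $A \rightarrow \ovrt^k B$ gives $(k \cdot m_B, n_B)$. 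This metadata uses $O(g_{rl})$ space in the RAM model, matching the claimed bound.

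The direct-access procedure for $\gexp(A)[i][j]$ then branches on the type of $\rhs(A)$, exactly as in Algorithm~\ref{alg:2D-SLP-direct-access} for the plain-SLP case, with two new cases for the run-length rules. For $A \rightarrow B \ohrz C$: if $j \le n_B$ recurse on $(B, i, j)$, otherwise recurse on $(C, i, j - n_B)$; symmetrically for vertical rules on the row index. For a horizontal run-length rule $A \rightarrow \ohrz^k B$, the $k$ horizontal copies of $\gexp(B)$ are laid side by side, so the cell $(i,j)$ falls in copy number $\lceil j / n_B \rceil$ at local column $((j-1) \bmod n_B) + 1$; I would recurse on $(B, i, ((j-1)\bmod n_B)+1)$. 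The vertical run-length case is handled symmetrically on the row index. This is the key observation: a run-length rule, despite standing for $k$ occurrences of $B$, contributes only one node on any root-to-leaf descent, because the index arithmetic jumps directly to the relevant copy in $O(1)$ time. Hence each recursive call descends one level in the parse tree and does $O(1)$ work, giving $O(h)$ total time.

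The only subtlety — and the part that needs the most care — is to argue that $h$, the height of the parse tree, is a well-defined and meaningful bound here: with run-length rules, the parse tree of the 2D RLSLP has a node $A$ whose subtree, if fully expanded, would have $k$ children all equal to $B$, so the "parse tree" in the run-length setting must be understood as the derivation tree where a run-length node has a single child (its body $B$), analogously to the 1D convention in~\cite{BilleLRSSW15}; with that convention the descent visits one node per level and the time is $O(h)$. I would state this convention explicitly. As in the plain 2D SLP case, correctness is an immediate induction on the height: the base case is a terminal rule, and each inductive step follows from the definition of $\ohrz$, $\ovrt$, $\ohrz^k$, and $\ovrt^k$ together with the fact that the index transformation computed at a node maps the global position in $\gexp(A)$ to the corresponding local position in the child's expansion. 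Since balancing of RLSLPs preserves size up to constant factors~\cite{NOU2022}, and this argument adapts to the 2D setting, one further obtains $h = O(\log N)$, so the access time becomes $O(\log N)$ after balancing, matching the remark preceding the statement.
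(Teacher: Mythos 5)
Your proof takes essentially the same approach as the paper: store $(m_A, n_A)$ for each variable and descend one level of the derivation tree per step, using modular arithmetic at run-length nodes to jump directly into the relevant copy of $B$, which is exactly what Algorithm~\ref{alg:2D-SLP-direct-access} does. In fact your index formula $((j-1)\bmod n_B)+1$ is more careful than the paper's $j \bmod n_B$ (which is off by one when $j$ is a multiple of $n_B$, and the pseudocode there also recurses on $C$ where it should recurse on $B$), so your version is the correct one.
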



\section{Macro Schemes for 2D Strings}

As natural extensions to 2D strings, $\delta$ and $\gamma$ measures also inherit the fact that $\delta$ is unreachable and $\gamma$ is unknown to be reachable. For this reason, also in the context of 2D strings, it can be interesting and useful to explore repetitiveness measures based on extensions to 2D strings of bidirectional macro schemes. In this section, we introduce such a notion. Then, in Section~\ref{sec:differences} we relate 2D macro schemes to $\delta$ and $\gamma$.

\begin{definition}Let $M_{m \times n}$ be a 2D string. A \emph{2D macro scheme} for $\Mmn$ is any factorization of $\Mmn$ into a set of phrases such that any phrase is either a square of dimension $1 \times 1$ called an \emph{explicit symbol/phrase}, or has a source in $\Mmn$ starting at a different position (which we can specify by its top-left corner). For a 2D macro scheme to be  \emph{valid} or \emph{decodifiable}, there must exist a function $\map: ([1 \dd m] \times [1\dd n])  \cup \{\bot\} \rightarrow ([1 \dd m] \times [1\dd n]) \cup \{\bot\}$ such that: i) $\map(\bot) = \bot$, and if $M[i][j]$ is an explicit symbol, then $\map(i,j) = \bot$; ii) for each non-explicit phrase $M[i_1\dd j_1][i_2\dd j_2]$, it must hold that $\map(i_1+t_1, i_2 + t_2) = \map(i_1, i_2) + (t_1, t_2)$ for $(t_1, t_2) \in [0\dd j_1-i_1] \times [0\dd j_2 - i_2]$; iii) for each $(i, j) \in [1\dd m] \times [1 \dd n]$ there exists $k > 0$ such that $\map^k(i, j) = \bot$.

The measure $b_{2D}$ is defined as the size of the smallest valid 2D macro scheme for $\Mmn$.
\end{definition}

\begin{example}
\label{ex:b_O1}Let $I_{n}$ be the identity matrix of dimension $n \times n$ over $\{\zero, \one\}$. We construct a macro scheme $\{X_1, X_2, X_3, X_4, X_5, X_6\}$ for this string, where: i) $X_1 = I_n[1][1]$ is an explicit symbol (the $\one$ in the top-left corner); ii) $X_2 = I_n[1][2]$ is an explicit symbol; $X_3 = I_n[2][1]$ is an explicit symbol; $X_4 = I_n[1][3\dd n]$ is a phrase with source $(1,2)$; $X_5 = I_n[3\dd n][1]$ is a phrase with source $(2,1)$; and $X_6 = I_n[2 \dd n][2\dd n]$ is a phrase with source $(1,1)$. The underlying function $\map$ can be defined as $\map(1,1) = \map(1, 2) = \map(2,1) = \bot$, $\map(1,j) = (1, j-1)$ for $j \in [3\dd n]$, $\map(i, 1) = (i-1, 1)$ for $i \in [3\dd n]$, and $\map(i,j) = (i-1, j-1)$ for $i, j \in [2\dd n] \times [2\dd n]$. One can see that $\map^n(i,j) = \bot$ for each $i$ and $j$. Hence, the macro scheme is valid and it holds that $b_{2D} \le 6$ in the family $I_{n}$. Figure~\ref{fig:fig1} shows this macro scheme for the string $I_7$.
\end{example}

\begin{figure}[ht]
\centering
\begin{tikzpicture}
    [
        box/.style={rectangle,draw=black,thick, minimum size=1cm},
    ]

\foreach \x in {0,1,...,6}{
    \foreach \y in {0,-1,...,-6}{
         \ifthenelse{\x=-\y}{\node at (\x,\y){\one};}{\node at (\x,\y){\zero};}
}}
\draw (-0.5,0.5) rectangle (0.5,-0.5);
\draw (0.5,-0.5) rectangle (6.5,-6.5);
\draw (-0.5,0.5) rectangle (6.5,-6.5);
\draw (-0.5,-1.5) rectangle (0.5,-1.5);
\draw (0.5,0.5) rectangle (1.5,-0.5);
\draw[dashed, ->] (0.8,-0.8) to[bend right] (0.2,-0.2);
\draw[dashed, ->] (0,-1.7) to[bend right] (0,-1.3);
\draw[dashed, ->] (1.7,0) to[bend right] (1.3, 0);
\end{tikzpicture}
\caption{Macro scheme with 6 phrases for the 2D string $I_7$. Each phrase points to the source from where it is directly copied.}
\label{fig:fig1}
\end{figure}

Analogously to the measures seen in previous sections, $b_{2D}$ reduces to definition of $b$ on 1D strings when either $m = 1$ or $n=1$.
Hence, also in case of macro schemes, we will use $b$ to denote $b_{2D}$.
In the following, we show that the relationship with the measures $g_{rl}$ and $g$ are preserved.

\begin{proposition}
The problem of determining if there exists a valid 2D macro scheme of size at most $k$ for a text $\Mmn$ is NP-complete.
\end{proposition}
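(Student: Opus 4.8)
The plan is to prove NP-completeness by establishing both membership in NP and NP-hardness separately. For membership in NP, a candidate 2D macro scheme of size at most $k$ can be described in polynomial space (listing for each phrase its bounding rectangle, and either its status as an explicit symbol or the top-left corner of its source). The validity condition requires certifying the existence of the function $\map$ satisfying (i)--(iii); this can be checked in polynomial time by building the functional graph whose vertex set is $[1\dd m]\times[1\dd n]\cup\{\bot\}$ with one outgoing edge per position determined by the phrase structure (for a cell in a non-explicit phrase, its image under $\map$ is forced by the source pointer via condition (ii); for an explicit symbol it maps to $\bot$), and then verifying that every cell eventually reaches $\bot$, i.e., that the only cycle-or-sink reachable is $\bot$. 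This is a simple reachability/acyclicity check, so the problem is in NP.

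For NP-hardness, the natural approach is to reduce from the 1D version, which is known to be NP-complete by Gallant~\cite{Gallant1982}, exactly as was done for 2D SLPs earlier in the excerpt. Concretely, given a 1D string $S[1\dd n]$ and a bound $k$, we form the $1 \times n$ matrix $M_{1\times n}$ with $M_{1\times n}[1][1\dd n] = S$. I would then argue that the 2D macro schemes for $M_{1\times n}$ are in bijection (preserving size) with the 1D bidirectional macro schemes for $S$: since $m = 1$, every rectangular phrase $M[1\dd 1][i_2 \dd j_2]$ is just a 1D phrase $S[i_2\dd j_2]$, every source top-left corner $(1, p)$ is just a 1D source position $p$, and the $\map$ function on $\{1\}\times[1\dd n]\cup\{\bot\}$ coincides with the 1D $\map$ on $[1\dd n]\cup\{\bot\}$ under the identification $(1,j)\leftrightarrow j$. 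Hence a valid 2D macro scheme of size $\le k$ for $M_{1\times n}$ exists if and only if a valid 1D bidirectional macro scheme of size $\le k$ for $S$ exists, completing the reduction.

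The only subtlety, and the one place I would be careful, is to make sure the 1D and 2D notions of ``valid/decodifiable macro scheme'' genuinely coincide on $1\times n$ matrices --- that no extra freedom (such as phrases that wrap or sources that overlap in a way impossible in 1D) is introduced in the 2D definition when $m=1$, and conversely that the 2D validity conditions (i)--(iii) restricted to a single row are literally the 1D conditions. Since the definitions in the excerpt were crafted precisely so that all these measures reduce to their 1D counterparts when $m=1$ or $n=1$ (and the excerpt already remarks on this for $b$, stating ``$b_{2D}$ reduces to definition of $b$ on 1D strings when either $m=1$ or $n=1$''), this is essentially immediate, so the main obstacle is really just writing the verification-in-P argument for NP membership cleanly rather than any genuine mathematical difficulty. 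I would therefore keep the proof short: one sentence for NP membership citing the functional-graph reachability check, and one sentence for hardness citing the $1\times n$ embedding and~\cite{Gallant1982}.
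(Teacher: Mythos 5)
Your proposal is correct and follows essentially the same route as the paper: NP-hardness via the constant-time embedding of the 1D problem (NP-complete by Gallant) as a $1\times n$ matrix, with the observation that the 2D validity conditions restricted to one row are literally the 1D ones. You are merely more explicit about NP membership (the functional-graph reachability check), which the paper leaves implicit.
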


\begin{proof}
The 1D version of the problem, which is known to be NP-complete~\cite{Gallant1982}, reduces to the 2D version of the problem in constant time.
\end{proof}

\begin{proposition}It always holds that $b = O(g_{rl})$. Moreover, there are string families where $b = o(g_{rl})$.
\end{proposition}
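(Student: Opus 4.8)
The plan is to mimic the one-dimensional argument in two steps: first show $b = O(g_{rl})$ by extracting a macro scheme from the grammar tree of an optimal 2D RLSLP, and then exhibit a family where the separation $b = o(g_{rl})$ holds. For the first part, I would take a smallest 2D RLSLP $G$ of size $g_{rl}$ generating $\Mmn$ and build its grammar tree by keeping, for each variable, only one (say the leftmost) occurrence in the parse tree and pruning all other occurrences into leaves. Each internal node of the grammar tree corresponds to a horizontal, vertical, or run-length rule; each leaf is either a terminal rule (an explicit symbol) or a pruned non-terminal occurrence. Reading off the leaves in a consistent geometric order (e.g. the recursive order induced by the $\ohrz$/$\ovrt$ structure of the root) partitions $\Mmn$ into a set of rectangular phrases: the terminal leaves give $1 \times 1$ explicit symbols, and each pruned leaf labelled $A$ corresponds to the rectangular region $\gexp(A)$, whose ``defining'' occurrence (the one kept in the grammar tree) lies elsewhere in $\Mmn$ and serves as its source. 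The number of phrases is $O(g_{rl})$ because the grammar tree has $O(g_{rl})$ nodes.

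The delicate point — and the main obstacle — is verifying that this factorization is a \emph{valid} 2D macro scheme, i.e. producing the function $\map$ satisfying conditions (i)–(iii), and in particular the acyclicity condition (iii). In 1D this follows because the source of a phrase corresponding to variable $A$ is derived ``earlier'' in the left-to-right/top-down traversal, so repeatedly applying $\map$ strictly decreases a well-founded rank. In 2D I would assign to each cell a rank equal to the depth in the parse tree (or the position of its defining occurrence in a fixed linearization of the grammar tree) and argue that $\map$ sends every non-explicit cell to a cell of strictly smaller rank; run-length rules $A \to \ohrz^k B$ need a small extra check, since the phrase for the $i$-th copy of $B$ ($i>1$) has the first copy of $B$ as source, which again lies at strictly smaller rank. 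Condition (ii) — that $\map$ is a rigid translation on each phrase — holds because $\gexp(A)$ and its source occurrence are literally the same rectangular matrix, so corresponding cells differ by a constant offset $(t_1,t_2)$. This argument is essentially the 2D analogue of the classical fact that $b = O(g_{rl})$ on strings, and I expect it to go through with only notational overhead.

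For the strict separation $b = o(g_{rl})$, the cleanest route is to reuse a known 1D family: by the results recalled in the Preliminaries there are strings with $b = o(g_{rl})$~\cite{NOP20}, and embedding such a string $S$ of length $n$ as the $1 \times n$ matrix $M_{1 \times n}$ gives, by the observations already made in the paper, $b(M_{1 \times n}) = b(S)$ and $g_{rl}(M_{1 \times n}) = g_{rl}(S)$, hence $b = o(g_{rl})$ in this 2D family as well. Alternatively, one could hope for a genuinely two-dimensional example with a larger gap, but since the statement only claims the existence of \emph{some} family, the one-dimensional embedding suffices and requires no new construction. I would present the $O(g_{rl})$ bound in detail via the grammar tree and then dispatch the separation in one or two sentences using the 1D embedding.
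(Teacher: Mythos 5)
Your overall route coincides with the paper's: build the grammar tree of a smallest 2D RLSLP, turn terminal leaves into explicit symbols and pruned occurrences into phrases whose source is the retained occurrence, and obtain the separation by embedding a known 1D family with $b = o(g_{rl})$ as $1 \times n$ matrices. The separation part is fine and is exactly what the paper does.

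The gap is in your treatment of run-length rules, and it breaks the $O(g_{rl})$ bound. You create one phrase per repeated copy of $B$ in a rule $A \rightarrow \ohrz^k B$ (\vir{the phrase for the $i$-th copy of $B$ ($i>1$) has the first copy of $B$ as source}), i.e.\ $k-1$ phrases per rule; equivalently, you are counting the grammar tree as if the node for $A$ had $k$ children, in which case it does \emph{not} have $O(g_{rl})$ nodes, because $k$ is not bounded by the grammar size. Concretely, for the $1\times n$ matrix with first row $\zero^n$ one has $g_{rl} = O(1)$ while your scheme would have $\Theta(n)$ phrases. The paper avoids this by producing at most two phrases per run-length rule: one for the leftmost copy of $B$ (or none, if that copy is the retained first occurrence whose expansion is not a single symbol) and a \emph{single} phrase covering all of $\ohrz^{k-1}B$, whose source is the occurrence of $\gexp(B)$ one period earlier --- an overlapping, self-referential copy, exactly as one encodes runs in a 1D bidirectional macro scheme. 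This also forces an adjustment to your validity argument: for that overlapping phrase, $\map$ does not jump to a strictly earlier defining occurrence in one step; it shifts a cell back by one period at a time and only leaves the run after up to $k$ applications, so the well-founded rank you propose must also take into account the position within the run, not only the identity of the defining occurrence in the grammar tree.
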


\begin{proof}We show how to construct a macro scheme from a 2D RLSLP, representing the same string and having the same asymptotic size. Let $G$ be a 2D RLSLP generating $\Mmn$ and consider its grammar tree. The first occurrence of each variable $B$ that expands to a single symbol at position $M[i][j]$, becomes an explicit phrase of the parsing at that position. For each occurrence of a variable $B$ that is not the first one in this tree, we create a phrase --exactly where this occurrence of $\gexp(B)$ should be in $M$-- that maps to the expansion of the first occurrence of $B$ in $M$. For a a rule $A \rightarrow \ohrz^kB$, we construct at most two phrases: one phrase for the leftmost $B$ of $\ohrz^kB$ pointing to the first occurrence of $B$, or if the expansion of this $B$ is a single symbol (or no phrase if this $B$ is the first occurrence but its expansion is not a single symbol), and other phrase for $\ohrz^{k-1}B$ pointing to the expansion of the occurrence of the $B$ before in $M$. We do analogously, for 2D vertical run-length rules. It is easy to see that this parsing is decodifiable, and that its size is bounded by the size of the grammar tree, that is, it is $O(g_{rl})$. For a family where $b = o(g_{rl})$, this holds for the 1D version~\cite{NOP20}.
\end{proof}

\section{Differences Between the 1D and the 2D Setting}\label{sec:differences}
On the 1-dimensional context, for each string $S\in\Sigma^*$ it holds that $\delta= O(\gamma) = O(b) = O(g_{rl})= O(g)$.
In particular, while $\delta \leq \gamma$ and $b \leq g_{rl} \leq g$ truly rely on their definitions, the missing link between $\gamma$ and $b$ has been proved by Kempa and Prezza by showing how any macro scheme of size $b$ induces a suitable string attractor with at most $2b$ positions~\cite{KP18}.
Later, Bannai et al.~\cite{BFIKMN21} showed that for 1D strings there is a separation between $\gamma$ and $b$ by using the family of Thue-Morse words, that is a string family for which $\gamma=O(1)$ and $b = \Theta(\log n)$.

In the previous sections, we have showed that on the 2D setting the same relationships between $\delta$ and $\gamma$ holds, as well as the one between $b$, $g_{rl}$, and $g$.
However, unlike the 1D settings, we can have 2D strings for which the measure $b$ is asymptotically smaller than $\gamma$.


\begin{proposition}\label{prop:b_ogamma}
There exists a 2D string family where $\gamma = \Omega(b \sqrt{N})$, where $N$ is the size of the 2D string.
\end{proposition}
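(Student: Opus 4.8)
The plan is to exhibit a 2D string family for which a constant-size 2D macro scheme exists while every 2D string attractor must be large. The natural candidate is the family of identity matrices $I_m$ (so $N = m^2$), since we have already computed in Example~\ref{ex:b_O1} that $b(I_m) \le 6 \in O(1)$, and in the proof of Proposition~\ref{prop:gamma_odelta} that $\gamma(I_m) = \Theta(m)$. Putting these two facts together immediately gives $\gamma(I_m) = \Theta(m) = \Theta(\sqrt{N}) = \Omega(b(I_m)\sqrt{N})$, which is exactly the claimed separation. So the bulk of the proof is really just recalling and citing these two already-established computations; the new content is only the observation that they combine to witness a $\sqrt{N}$-factor gap, in sharp contrast with the 1D world where $\gamma \le b$ always holds.

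Concretely, I would structure the proof in three short steps. First, recall the macro scheme of Example~\ref{ex:b_O1}: the six phrases $X_1 = I_n[1][1]$, $X_2 = I_n[1][2]$, $X_3 = I_n[2][1]$ (explicit), $X_4 = I_n[1][3\dd n]$ with source $(1,2)$, $X_5 = I_n[3\dd n][1]$ with source $(2,1)$, and $X_6 = I_n[2\dd n][2\dd n]$ with source $(1,1)$, together with the map for which $\map^n$ sends every cell to $\bot$; hence $b(I_m) = O(1)$. Second, recall the attractor lower bound from Proposition~\ref{prop:gamma_odelta}: every row of $I_m$ is a distinct $1\times m$ substring and every column a distinct $m\times 1$ substring, so each of the $m$ rows and each of the $m$ columns must be ``hit'' by an attractor position; the only way to do this with $O(m)$ positions is essentially to mark all $m$ ones on the diagonal, and one additionally needs a zero-cell to cover the all-$\zero$ substrings, giving $\gamma(I_m) = m+1 = \Theta(m)$. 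Third, observe $N = m^2$, so $m = \sqrt{N}$ and therefore $\gamma(I_m) = \Theta(\sqrt{N}) = \Omega(b(I_m)\cdot \sqrt{N})$, proving the proposition.

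There is essentially no hard step here — both ingredients are already proven earlier in the paper, so the proof is a one-paragraph assembly. If anything, the only point that deserves a sentence of care is the attractor lower bound argument, i.e.\ justifying that $\gamma(I_m)$ cannot be $o(m)$: this follows because the $m$ rows are pairwise distinct submatrices occurring only once each (likewise the columns), so an attractor must contain at least one position in each row and each column, forcing $\gamma \ge m$. That is the same reasoning already spelled out in the proof of Proposition~\ref{prop:gamma_odelta}, so I would simply reference it rather than re-deriving it. The conceptual takeaway to state alongside the proof is that, unlike the 1D inequality $\gamma \le 2b$ of Kempa and Prezza, in two dimensions a bidirectional-style copy mechanism can be dramatically more succinct than any attractor, and the identity matrix is the simplest witness of this phenomenon.
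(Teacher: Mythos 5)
Your proposal is correct and follows essentially the same route as the paper's own proof: it takes the identity-matrix family, combines the $b = O(1)$ macro scheme of Example~\ref{ex:b_O1} with the $\gamma(I_m) = \Theta(m)$ lower bound from Proposition~\ref{prop:gamma_odelta}, and concludes via $m = \sqrt{N}$. No gaps; the paper's version is just a terser statement of the same assembly.
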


\begin{proof}As shown in the proof of Proposition~\ref{prop:gamma_odelta}, $\gamma = \Omega(n)$ in the family of identity matrices $I_{n}$. 
On the other hand, in Example~\ref{ex:b_O1} we showed how to construct a macro scheme with only 6 phrases for the same family of strings, i.e. $b = O(1)$.
The claim follows since for every identity matrix $I_n$ it holds that $n = \sqrt{N}$.

\end{proof}

It follows that, when considering a 2D setting, the measure $b$ can be much better than $\gamma$. Hence, both measures are uncomparable with each other.
A follow up question is whether the relationship between the measures $\delta$ and $b$ on the 2-dimensional context is preserved.
As the following proposition shows, not only the measure $b$ can be asymptotically smaller than $\delta$, but it can be asymptotically smaller than $\delta_\square$ too.

\begin{proposition}\label{prop:b_lt_delta_square}
There exists a 2D string family where $\delta_\square = \Omega(b\sqrt[4]{N})$, where $N$ is the size of the 2D string.
\end{proposition}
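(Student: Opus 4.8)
The plan is to reuse the Carfagna–Manzini construction from Example~\ref{ex:Carfagna_Manzini_Lemma4}, for which $\deltaCM = O(1)$ but $\delta = \Omega(\sqrt{n})$, and argue that the same family admits a 2D macro scheme of constant size. Recall that the first row of $M_{n\times n}$ is the string $S = B_1 B_2 \cdots B_{\sqrt{n}/2}$ where $B_i = \one^i\zero^{2\sqrt{n}-i}$, and every remaining row is $\texttt{\#}^n$. I would first observe that $\deltaCM(M_{n\times n}) = \Omega(\sqrt[4]{N})$ when $N = n^2$: indeed, by the argument in Example~\ref{ex:Carfagna_Manzini_Lemma4}, the distinct strings $\zero^j\one^i\zero^{\sqrt{n}-j-i}$ of length $\sqrt{n}$ occurring in $S$ number $\Omega(n)$; but wait — that argument gives $\delta$, not $\deltaCM$, since these are $1\times\sqrt{n}$ rectangles. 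So instead I need to exhibit $\Omega(\sqrt{n})$ distinct \emph{square} $k\times k$ factors for a single $k$. Taking $k = \sqrt{n}$, each square factor anchored at the top-left corner $M[1\dd k][c\dd c+k-1]$ has its first row equal to $S[c\dd c+k-1]$ and rows $2$ through $k$ all $\texttt{\#}$. Since there are $\Omega(n)$ distinct length-$\sqrt{n}$ factors of $S$ (as computed above), there are $\Omega(n)$ distinct such $k\times k$ squares, hence $\deltaCM(M_{n\times n}) \geq \Omega(n)/k^2 = \Omega(n)/n = \Omega(1)$ — too weak. I should instead pick $k$ smaller, say $k = \Theta(n^{1/2})$ is wrong; rather, since $P_M(k,k) = \Omega(n)$ for $k = \sqrt n$ gives only $\Omega(1)$, I want $k$ with $P_M(k,k)/k^2$ maximized. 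The number of distinct length-$k$ factors of $S$ crossing a single block boundary is $\Omega(k)$ for $k \leq \sqrt n$ (strings of the form $\zero^j\one^i\zero^{k-j-i}$), and summing the analysis over all block pairs one gets $P_M(k,k) = \Omega(\min(k,\sqrt n)\cdot \sqrt n)$; at $k=\sqrt n$ this is $\Omega(n)$ and divided by $k^2=n$ gives $\Omega(1)$, while for $k = \Theta(n^{1/4})$ one expects $P_M(k,k) = \Omega(k\sqrt n)$ and the ratio is $\Omega(\sqrt n / k) = \Omega(n^{1/4}) = \Omega(\sqrt[4]{N})$. So the first main step is a careful count showing $\deltaCM(M_{n\times n}) = \Omega(n^{1/4})$.

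The second step is to build a constant-size 2D macro scheme for $M_{n\times n}$. The key insight is that $S$ itself has a bidirectional macro scheme of size $O(1)$: each block $B_i = \one^i\zero^{2\sqrt n - i}$ can be copied (with its $\one$-prefix one shorter, its $\zero$-suffix one longer) from the following block $B_{i+1}$, much as the identity matrix is handled in Example~\ref{ex:b_O1}; concretely $B_1\cdots B_{\sqrt n/2}$ is covered by a constant number of phrases by exploiting the overlapping-run trick. I would then extend this to two dimensions: put the first row's $O(1)$ phrases in place, make the top-left $\texttt{\#}$ of row $2$ an explicit symbol, and cover the entire block $M[2\dd n][1\dd n]$ of $\texttt{\#}$'s together with propagating a copy vertically and horizontally from that single explicit $\texttt{\#}$, exactly as the $\texttt{0}$-region is handled in the identity-matrix example. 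This yields $b_{2D}(M_{n\times n}) = O(1)$.

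Combining, $\deltaCM = \Omega(n^{1/4}) = \Omega(b \cdot \sqrt[4]{N})$ since $N = n^2$ and $b = O(1)$, which is the claim. The main obstacle I anticipate is the first step: getting the \emph{square}-factor count $P_M(k,k)$ right for the optimal choice of $k$ and verifying it is genuinely $\Omega(n^{1/4})$ after dividing by $k^2$ — the one-dimensional factor count of $S$ is clear, but I must be careful that restricting to $k\times k$ \emph{squares} (rather than $1\times k$ strips) does not lose a factor, which works precisely because all non-first rows are the constant symbol $\texttt{\#}$, so a $k\times k$ square at the top is determined by its first row alone. A secondary, more routine obstacle is checking decodability (the existence of the $\map$ function with property (iii)) for the 2D scheme; this follows the identity-matrix template, where each chain of $\map$-applications strictly decreases a coordinate and terminates at $\bot$.
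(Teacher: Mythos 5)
Your proposal has a fatal gap in both of its main steps, and the chosen string family cannot work. First, the Carfagna--Manzini family of Example~\ref{ex:Carfagna_Manzini_Lemma4} was constructed precisely so that $\delta_\square = O(1)$ (this is \cite[Lemma~4]{CarfagnaManzini2023}, quoted in the example), so no choice of $k$ can make $P_M(k,k)/k^2 = \Omega(n^{1/4})$. Your salvage attempt rests on the count $P_M(k,k) = \Omega(\min(k,\sqrt n)\cdot\sqrt n)$, which overcounts: a window of length $k \le \sqrt{n}$ in $S$ contains at most one run of $\one$'s, so its distinct factors are of the form $\zero^j\one^i\zero^{k-i-j}$; only the $O(\min(k,\sqrt n))$ blocks with $i \le k$ contribute the full-run factors (blocks with $i>k$ only contribute the shared truncated factors $\one^a\zero^{k-a}$ and $\zero^{k-a}\one^a$), giving $P_S(k) = O(k^2)$ and hence $P_M(k,k)/k^2 = O(1)$, exactly as Carfagna and Manzini proved. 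Second, the claim $b_{2D}(M_{n\times n}) = O(1)$ is also false: since the symbol $\texttt{\#}$ does not occur in $S$, every phrase intersecting row~$1$ must have its source starting in row~$1$, so the restriction of any valid 2D macro scheme to the first row induces a valid 1D macro scheme for $S$; since $b(S) \ge \gamma(S) \ge \delta(S) = \Omega(\sqrt n)$, you get $b_{2D}(M_{n\times n}) = \Omega(\sqrt n)$, not $O(1)$. The ``overlapping-run trick'' from Example~\ref{ex:b_O1} does not apply because consecutive blocks $B_i$ and $B_{i+1}$ are not shifts of one another within a single coverable region.

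The paper's proof takes an entirely different route: it builds a new family $A_k$ of size $N=\Theta(k^4)$ that packs all $\Theta(k^4)$ members of a combinatorial family $\mathcal{F}(k)$ of $k\times k$ squares (two $\one$'s in distinct rows, rest $\zero$'s) as square factors, yielding $\delta_\square(A_k) = \Omega(k^4/k^2) = \Omega(k^2) = \Omega(\sqrt N)$, and then exhibits an explicit macro scheme of size $O(k) = O(\sqrt[4]{N})$ exploiting the fact that $A_k$ has only three distinct row types. The essential idea you are missing is that to separate $b$ from $\delta_\square$ one needs a family where the \emph{square} substring complexity is genuinely large (many distinct $k\times k$ blocks for a single $k$) while the matrix is still globally assembled from very few row/column patterns; a family with rich one-dimensional complexity concentrated in a single row achieves neither.
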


\begin{proof}
Let $k > 3$. Let $\mathcal{F}(k)$ be a set containing all the 2D strings of dimensions $k \times k$ where: i) exactly two cells in distinct rows contain a $\one$; ii) the $\one$ on the row above cannot be more to the right than the $\one$ on the row below; iii) all the remaining cells contain $\zero$'s. There are $\binom{k}{2}$ ways to choose two distinct rows to verify i), and then $k^2$ ways to choose which cells in these rows contain the $\one$'s.
However, only $k(k+1)/2$ of these 2D strings satisfy condition ii).
Hence, there exist exactly $k^2(k-1)(k+1)/4$ of such strings in $\mathcal{F}(k)$.

Let us now construct a 2D string $A_k$ containing all the $k \times k$ strings in $\mathcal{F}(k)$ as substrings.
Let $B(i,j)$ be the $k \times k$ string containing a $\one$ in position $(1,1)$, and another $\one$ in position $(i,j)$.
The matrix $A_k$ is defined as follows: for each $i\in [2\dd k]$ take the $k \times k^2$ substring containing only $\zero$'s and append below the $k \times k^2$ matrix  $B(i,1) \ohrz B(i, 2) \ohrz \dots \ohrz B(i,k)$.
Then, concatenate all these matrices from top to bottom.
Finally, append to the left and to the right a $(k-1)2k \times k$ substring containing only $\zero$'s.
Schematically, the matrix $A_k$ has the following form:
$$\begin{array}{lllll}
0_{k\times k} & 0_{k \times k} & \dots & 0_{k\times k} & 0_{k \times k}\\
0_{k\times k} & B(2,1) & \dots & B(2, k) & 0_{k \times k}\\ 
\vdots & \vdots & \vdots & \vdots & \vdots \\
0_{k\times k} & 0_{k \times k} & \dots & 0_{k\times k} & 0_{k \times k}\\
0_{k\times k} & B(k,1) & \dots & B(k, k) & 0_{k \times k}\\
\end{array}$$
One can see that we can move a $k \times k$ window containing both $\one$'s of some matrix $B(i, j)$ for some $i$ and $j$ to find any string in $\mathcal{F}(k)$ as a substring of $A_k$.
Thus, it holds that $\delta_\square(A_k) = \Omega(k^2)$.
Let $i \in [2\dd k]$ and $j \in [1\dd k]$.
Moreover, the size of $A_k$ is $N = m\times n = 2k(k-1) \times k(k+2) = \Theta(k^4)$. 

Now we show how to construct a valid macro scheme for $A_k$.
The intuition for the macro scheme is to first create phrases for the rectangles containing only $\zero$'s surrounding the central part of $A_k$.
Then, we observe that the submatrices $B(i,1) \ohrz B(i, 2) \ohrz \dots \ohrz B(i,k)$ contain only 3 types of rows: 1) $\zero^{k^2}$; 2) $(\one\zero^{k-1})^k$; or 3) $(\one\zero^k)^{k-1}\one$.
Thus, we can use $O(1)$ phrases for the first occurrence (at the top) of rows of type 2) and 3), and then use them as a reference for the other occurrences.
More in detail, the macro scheme contains the phrases $\{X_1, \dots, X_{12}\}$ where: 
i) $X_1 = A_k[1][1]$ is an explicit phrase containing a $\zero$;
ii) $X_2 = A_k[2\dd k][1]$ is a phrase with source $(1,1)$;
iii) $X_3 = A_k[1\dd k][2\dd n]$ is a phrase with source $(1,1)$; 
iv) $X_4 = A_k[k+1 \dd m][1\dd k]$ is a phrase with source $(1,1)$;
v) $X_5 = A_k[k+1 \dd m][k(k+1)+1\dd n]$ is a phrase with source $(1,1)$;
vi) $X_6 = A_k[k+1][k+1]$ is a explicit phrase containing a $\one$;
vii) $X_7 = A_k[k+1][k+2\dd 2k]$ is a phrase with source $(1,1)$;
viii) $X_8 = A_k[k+1][2k+1\dd k(k+1)]$ is a phrase with source $(k+1,k+1)$;
ix) $X_9 = A_k[k+2][k+1]$ is a explicit phrase containing a $\one$;
x) $X_{10} = A_k[k+2][k+2\dd 2k+1]$ is a phrase with source $(1,1)$;
xi) $X_{11} = A_k[k+2][2k+2\dd k(k+1)]$ is a phrase with source $(k+2,k+1)$;
xii) $X_{12} = A_k[k+3\dd 2k][k+1\dd k(k+1)]$ is a phrase with source $(1,1)$.

Observe that the remaining phrases refer to the matrix $A_k[2k+1\dd m][k+1\dd k(k+1)]$, with rows of the type 1), 2), and 3) described above.
Hence, each range of consecutive rows of type 1) (i.e. of all $\zero$'s) can be copied from the (biggest) block of consecutive rows of $\zero$'s starting at the beginning of phrase $X_{12}$, and we have $2(k-2)$ of such phrases.
For each row of type 2) we use a single phrase pointing to the beginning of its first occurrence, at the beginning of phrase $X_6$, and we have exactly $k-2$ of such rows.
Analogously, for each of the $k-2$ rows of type 3), we use a single phrase pointing to the beginning of phrase $X_9$.
Thus, the total size of the macro scheme built is indeed $12 + 4(k-2) = O(k)$. 
Hence, we proved that $\delta_\square = \Omega(bk)$. As $k = \Theta(\sqrt[4]{N})$, the claim follows.
\end{proof}

\begin{figure}[ht]
\centering
\tiny
 \begin{tikzpicture}
        \matrix [matrix of math nodes,left delimiter=(,right delimiter=)] (m)
        {
            \zero & \zero & \zero & \zero & \zero & \zero  & \zero & \zero  & \zero & \zero  & \zero & \zero &  \zero & \zero & \zero & \zero & \zero & \zero  & \zero & \zero  & \zero & \zero  & \zero & \zero \\
            \zero & \zero & \zero & \zero & \zero & \zero  & \zero & \zero  & \zero & \zero  & \zero & \zero &  \zero & \zero & \zero & \zero & \zero & \zero  & \zero & \zero  & \zero & \zero  & \zero & \zero\\
            \zero & \zero & \zero & \zero & \zero & \zero  & \zero & \zero  & \zero & \zero  & \zero & \zero &  \zero & \zero & \zero & \zero & \zero & \zero  & \zero & \zero  & \zero & \zero  & \zero & \zero \\
            \zero & \zero & \zero & \zero & \zero & \zero  & \zero & \zero  & \zero & \zero  & \zero & \zero &  \zero & \zero & \zero & \zero & \zero & \zero  & \zero & \zero  & \zero & \zero  & \zero & \zero \\
            \zero & \zero & \zero & \zero & \one & \zero  & \zero & \zero  & \one & \zero  & \zero & \zero & \one & \zero  & \zero & \zero & \one & \zero  & \zero & \zero & \zero & \zero & \zero & \zero \\
            \zero & \zero & \zero & \zero & \one & \zero  & \zero & \zero  & \zero & \one  & \zero & \zero & \zero & \zero & \one & \zero & \zero & \zero & \zero & \one & \zero & \zero & \zero & \zero\\
            \zero & \zero & \zero & \zero & \zero & \zero  & \zero & \zero  & \zero & \zero  & \zero & \zero &  \zero & \zero & \zero & \zero & \zero & \zero  & \zero & \zero  & \zero & \zero  & \zero & \zero \\
            \zero & \zero & \zero & \zero & \zero & \zero  & \zero & \zero  & \zero & \zero  & \zero & \zero &  \zero & \zero & \zero & \zero & \zero & \zero  & \zero & \zero  & \zero & \zero  & \zero & \zero \\
            \zero & \zero & \zero & \zero & \zero & \zero  & \zero & \zero  & \zero & \zero  & \zero & \zero &  \zero & \zero & \zero & \zero & \zero & \zero  & \zero & \zero  & \zero & \zero  & \zero & \zero \\
            \zero & \zero & \zero & \zero & \zero & \zero  & \zero & \zero  & \zero & \zero  & \zero & \zero &  \zero & \zero & \zero & \zero & \zero & \zero  & \zero & \zero  & \zero & \zero  & \zero & \zero\\
            \zero & \zero & \zero & \zero & \zero & \zero  & \zero & \zero  & \zero & \zero  & \zero & \zero &  \zero & \zero & \zero & \zero & \zero & \zero  & \zero & \zero  & \zero & \zero  & \zero & \zero \\
            \zero & \zero & \zero & \zero & \zero & \zero  & \zero & \zero  & \zero & \zero  & \zero & \zero &  \zero & \zero & \zero & \zero & \zero & \zero  & \zero & \zero  & \zero & \zero  & \zero & \zero \\
            \zero & \zero & \zero & \zero & \one & \zero  & \zero & \zero  & \one & \zero  & \zero & \zero & \one & \zero  & \zero & \zero & \one & \zero  & \zero & \zero & \zero & \zero & \zero & \zero \\
             \zero & \zero & \zero & \zero & \zero & \zero  & \zero & \zero  & \zero & \zero  & \zero & \zero &  \zero & \zero & \zero & \zero & \zero & \zero  & \zero & \zero  & \zero & \zero  & \zero & \zero \\
            \zero & \zero & \zero & \zero & \one & \zero  & \zero & \zero  & \zero & \one  & \zero & \zero & \zero & \zero & \one & \zero & \zero & \zero & \zero & \one & \zero & \zero & \zero & \zero\\
             \zero & \zero & \zero & \zero & \zero & \zero  & \zero & \zero  & \zero & \zero  & \zero & \zero &  \zero & \zero & \zero & \zero & \zero & \zero  & \zero & \zero  & \zero & \zero  & \zero & \zero \\
                         \zero & \zero & \zero & \zero & \zero & \zero  & \zero & \zero  & \zero & \zero  & \zero & \zero &  \zero & \zero & \zero & \zero & \zero & \zero  & \zero & \zero  & \zero & \zero  & \zero & \zero \\
            \zero & \zero & \zero & \zero & \zero & \zero  & \zero & \zero  & \zero & \zero  & \zero & \zero &  \zero & \zero & \zero & \zero & \zero & \zero  & \zero & \zero  & \zero & \zero  & \zero & \zero\\
            \zero & \zero & \zero & \zero & \zero & \zero  & \zero & \zero  & \zero & \zero  & \zero & \zero &  \zero & \zero & \zero & \zero & \zero & \zero  & \zero & \zero  & \zero & \zero  & \zero & \zero \\
            \zero & \zero & \zero & \zero & \zero & \zero  & \zero & \zero  & \zero & \zero  & \zero & \zero &  \zero & \zero & \zero & \zero & \zero & \zero  & \zero & \zero  & \zero & \zero  & \zero & \zero \\
            \zero & \zero & \zero & \zero & \one & \zero  & \zero & \zero  & \one & \zero  & \zero & \zero & \one & \zero  & \zero & \zero & \one & \zero  & \zero & \zero & \zero & \zero & \zero & \zero \\
             \zero & \zero & \zero & \zero & \zero & \zero  & \zero & \zero  & \zero & \zero  & \zero & \zero &  \zero & \zero & \zero & \zero & \zero & \zero  & \zero & \zero  & \zero & \zero  & \zero & \zero \\
              \zero & \zero & \zero & \zero & \zero & \zero  & \zero & \zero  & \zero & \zero  & \zero & \zero &  \zero & \zero & \zero & \zero & \zero & \zero  & \zero & \zero  & \zero & \zero  & \zero & \zero \\
            \zero & \zero & \zero & \zero & \one & \zero  & \zero & \zero  & \zero & \one  & \zero & \zero & \zero & \zero & \one & \zero & \zero & \zero & \zero & \one & \zero & \zero & \zero & \zero\\
        };  
        \draw[thick] (m-5-5.north west) -- (m-5-20.north east) -- (m-8-20.south east) -- (m-8-5.south west) -- (m-5-5.north west);
        \draw[thick] (m-13-5.north west) -- (m-13-20.north east) -- (m-16-20.south east) -- (m-16-5.south west) -- (m-13-5.north west);
        \draw[thick] (m-21-5.north west) -- (m-21-20.north east) -- (m-24-20.south east) -- (m-24-5.south west) -- (m-21-5.north west);
        \draw[thick] (m-5-9.north west) -- (m-8-9.south west);
        \draw[thick] (m-5-13.north west) -- (m-8-13.south west);
        \draw[thick] (m-5-17.north west) -- (m-8-17.south west);
        \draw[thick] (m-13-9.north west) -- (m-16-9.south west);
        \draw[thick] (m-13-13.north west) -- (m-16-13.south west);
        \draw[thick] (m-13-17.north west) -- (m-16-17.south west);
        \draw[thick] (m-21-9.north west) -- (m-24-9.south west);
        \draw[thick] (m-21-13.north west) -- (m-24-13.south west);
        \draw[thick] (m-21-17.north west) -- (m-24-17.south west);
        \draw[thick, dashed] (m-3-2.north west) -- (m-3-5.north east) -- (m-6-5.south east) -- (m-6-2.south west) -- (m-3-2.north west);
        \draw[thick, dashed] (m-12-8.north west) -- (m-12-11.north east) -- (m-15-11.south east) -- (m-15-8.south west) -- (m-12-8.north west);
        \draw[dotted] (m-1-5.north west) -- (m-24-5.north west);
        \draw[dotted] (m-1-20.north east) -- (m-24-20.north east);
        \draw[dotted] (m-5-1.north west) -- (m-5-24.north east);
    \end{tikzpicture}\hspace{0.5cm}
    \begin{tikzpicture}
        \matrix [matrix of math nodes] (m)
        {
            \zero & \zero & \zero & \zero & \zero & \zero  & \zero & \zero  & \zero & \zero  & \zero & \zero &  \zero & \zero & \zero & \zero & \zero & \zero  & \zero & \zero  & \zero & \zero  & \zero & \zero \\
            \zero & \zero & \zero & \zero & \zero & \zero  & \zero & \zero  & \zero & \zero  & \zero & \zero &  \zero & \zero & \zero & \zero & \zero & \zero  & \zero & \zero  & \zero & \zero  & \zero & \zero\\
            \zero & \zero & \zero & \zero & \zero & \zero  & \zero & \zero  & \zero & \zero  & \zero & \zero &  \zero & \zero & \zero & \zero & \zero & \zero  & \zero & \zero  & \zero & \zero  & \zero & \zero \\
            \zero & \zero & \zero & \zero & \zero & \zero  & \zero & \zero  & \zero & \zero  & \zero & \zero &  \zero & \zero & \zero & \zero & \zero & \zero  & \zero & \zero  & \zero & \zero  & \zero & \zero \\
            \zero & \zero & \zero & \zero & \one & \zero  & \zero & \zero  & \one & \zero  & \zero & \zero & \one & \zero  & \zero & \zero & \one & \zero  & \zero & \zero & \zero & \zero & \zero & \zero \\
            \zero & \zero & \zero & \zero & \one & \zero  & \zero & \zero  & \zero & \one  & \zero & \zero & \zero & \zero & \one & \zero & \zero & \zero & \zero & \one & \zero & \zero & \zero & \zero\\
            \zero & \zero & \zero & \zero & \zero & \zero  & \zero & \zero  & \zero & \zero  & \zero & \zero &  \zero & \zero & \zero & \zero & \zero & \zero  & \zero & \zero  & \zero & \zero  & \zero & \zero \\
            \zero & \zero & \zero & \zero & \zero & \zero  & \zero & \zero  & \zero & \zero  & \zero & \zero &  \zero & \zero & \zero & \zero & \zero & \zero  & \zero & \zero  & \zero & \zero  & \zero & \zero \\
            \zero & \zero & \zero & \zero & \zero & \zero  & \zero & \zero  & \zero & \zero  & \zero & \zero &  \zero & \zero & \zero & \zero & \zero & \zero  & \zero & \zero  & \zero & \zero  & \zero & \zero \\
            \zero & \zero & \zero & \zero & \zero & \zero  & \zero & \zero  & \zero & \zero  & \zero & \zero &  \zero & \zero & \zero & \zero & \zero & \zero  & \zero & \zero  & \zero & \zero  & \zero & \zero\\
            \zero & \zero & \zero & \zero & \zero & \zero  & \zero & \zero  & \zero & \zero  & \zero & \zero &  \zero & \zero & \zero & \zero & \zero & \zero  & \zero & \zero  & \zero & \zero  & \zero & \zero \\
            \zero & \zero & \zero & \zero & \zero & \zero  & \zero & \zero  & \zero & \zero  & \zero & \zero &  \zero & \zero & \zero & \zero & \zero & \zero  & \zero & \zero  & \zero & \zero  & \zero & \zero \\
            \zero & \zero & \zero & \zero & \one & \zero  & \zero & \zero  & \one & \zero  & \zero & \zero & \one & \zero  & \zero & \zero & \one & \zero  & \zero & \zero & \zero & \zero & \zero & \zero \\
             \zero & \zero & \zero & \zero & \zero & \zero  & \zero & \zero  & \zero & \zero  & \zero & \zero &  \zero & \zero & \zero & \zero & \zero & \zero  & \zero & \zero  & \zero & \zero  & \zero & \zero \\
            \zero & \zero & \zero & \zero & \one & \zero  & \zero & \zero  & \zero & \one  & \zero & \zero & \zero & \zero & \one & \zero & \zero & \zero & \zero & \one & \zero & \zero & \zero & \zero\\
             \zero & \zero & \zero & \zero & \zero & \zero  & \zero & \zero  & \zero & \zero  & \zero & \zero &  \zero & \zero & \zero & \zero & \zero & \zero  & \zero & \zero  & \zero & \zero  & \zero & \zero \\
                         \zero & \zero & \zero & \zero & \zero & \zero  & \zero & \zero  & \zero & \zero  & \zero & \zero &  \zero & \zero & \zero & \zero & \zero & \zero  & \zero & \zero  & \zero & \zero  & \zero & \zero \\
            \zero & \zero & \zero & \zero & \zero & \zero  & \zero & \zero  & \zero & \zero  & \zero & \zero &  \zero & \zero & \zero & \zero & \zero & \zero  & \zero & \zero  & \zero & \zero  & \zero & \zero\\
            \zero & \zero & \zero & \zero & \zero & \zero  & \zero & \zero  & \zero & \zero  & \zero & \zero &  \zero & \zero & \zero & \zero & \zero & \zero  & \zero & \zero  & \zero & \zero  & \zero & \zero \\
            \zero & \zero & \zero & \zero & \zero & \zero  & \zero & \zero  & \zero & \zero  & \zero & \zero &  \zero & \zero & \zero & \zero & \zero & \zero  & \zero & \zero  & \zero & \zero  & \zero & \zero \\
            \zero & \zero & \zero & \zero & \one & \zero  & \zero & \zero  & \one & \zero  & \zero & \zero & \one & \zero  & \zero & \zero & \one & \zero  & \zero & \zero & \zero & \zero & \zero & \zero \\
             \zero & \zero & \zero & \zero & \zero & \zero  & \zero & \zero  & \zero & \zero  & \zero & \zero &  \zero & \zero & \zero & \zero & \zero & \zero  & \zero & \zero  & \zero & \zero  & \zero & \zero \\
              \zero & \zero & \zero & \zero & \zero & \zero  & \zero & \zero  & \zero & \zero  & \zero & \zero &  \zero & \zero & \zero & \zero & \zero & \zero  & \zero & \zero  & \zero & \zero  & \zero & \zero \\
            \zero & \zero & \zero & \zero & \one & \zero  & \zero & \zero  & \zero & \one  & \zero & \zero & \zero & \zero & \one & \zero & \zero & \zero & \zero & \one & \zero & \zero & \zero & \zero\\
        }; 
        \draw (m-1-1.north west) -- (m-1-1.north east) -- (m-1-1.south east) -- (m-1-1.south west) -- (m-1-1.north west);
        \draw (m-1-1.south east) -- (m-4-1.south east) -- (m-4-1.south west) -- (m-2-1.north west);
        \draw (m-1-1.north east) -- (m-1-24.north east) -- (m-4-24.south east) -- (m-4-1.south east);
        \draw (m-4-4.south east) -- (m-24-4.south east) -- (m-24-1.south west) -- (m-5-1.north west);
        \draw  (m-4-24.south east) -- (m-24-24.south east) -- (m-24-21.south west) -- (m-4-21.south west);
        \draw  (m-4-5.south east) -- (m-5-5.south east) -- (m-5-4.south east);
        \draw  (m-4-8.south east) -- (m-5-8.south east) -- (m-5-5.south east);
        \draw  (m-5-8.south east) -- (m-5-21.south west);
        \draw  (m-5-5.south east) -- (m-6-5.south east) -- (m-6-4.south east);
        \draw  (m-5-9.south east) -- (m-6-9.south east) -- (m-6-5.south east);
        \draw  (m-6-9.south east) -- (m-6-21.south west);
        \draw  (m-8-4.south east) -- (m-8-21.south west);
        \draw  (m-12-4.south east) -- (m-12-21.south west);
        \draw  (m-13-4.south east) -- (m-13-21.south west);
        \draw  (m-13-4.south east) -- (m-13-21.south west);
        \draw  (m-14-4.south east) -- (m-14-21.south west);
        \draw  (m-15-4.south east) -- (m-15-21.south west);
        \draw  (m-20-4.south east) -- (m-20-21.south west);
        \draw  (m-21-4.south east) -- (m-21-21.south west);
        \draw  (m-23-4.south east) -- (m-23-21.south west);
        \draw  (m-24-4.south east) -- (m-24-21.south west);
        %
        \draw[->]  (m-13-5) to [bend left=90] (m-5-5);
        %
        \draw[->]  (m-15-5) to [bend left=90] (m-6-5);
        %
        \draw[->]  (m-9-5) to [bend left=90] (m-7-5);
        \draw[->]  (m-14-5) to [bend left=90] (m-7-5);
    \end{tikzpicture}
    \caption{Matrix $A_k$ for $k = 4$. To the left, we highlight the substrings $B(i,j)$, and show how to obtain other strings in $\mathcal{F}(k)$ by moving a $4 \times 4$ window. To the right, we show the macro scheme described in Proposition~\ref{prop:b_lt_delta_square}, which is formed by exactly $4(k+1)$ phrases. Each arrow show the source from where the phrases are copied.}
    \label{fig:matrix}
\end{figure}
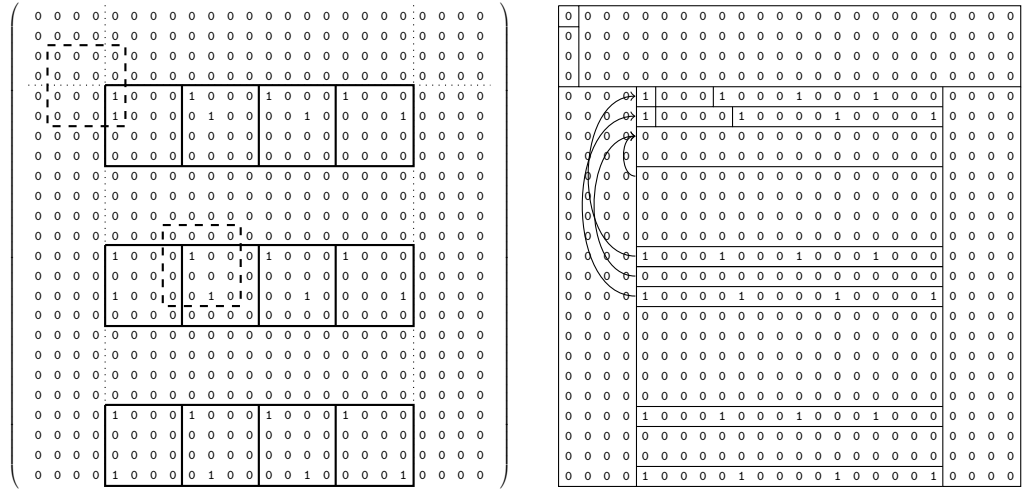

An example on $A_4$ can be seen in Figure~\ref{fig:matrix}.
Observe that we can obtain the same result if we compare $b$ and $\delta_\square$ on a square matrix, i.e. with the same setting from~\cite{CarfagnaManzini2023}.
In fact, since the measure $\delta_\square$ is monotone, we can append columns of $\zero$'s to the left or to the right of $A_k$ until we obtain a square preserving the lower bound $\delta_\square=\Omega(k^2)$, while a suitable macro scheme may require at most $O(1)$ new phrases.

As $\delta_\square\leq \delta$ for all 2D strings, we derive the following corollary.

\begin{corollary}
    There exists a 2D string family where $b=o(\delta)$.
\end{corollary}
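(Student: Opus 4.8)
The plan is to simply chain together the two results immediately preceding the corollary: the inequality $\delta_\square(\Mmn) \le \delta(\Mmn)$ that holds for every 2D string (the lemma relating $\delta_\square$ and $\delta$), and the separation $\delta_\square = \Omega(b\,\sqrt[4]{N})$ established in Proposition~\ref{prop:b_lt_delta_square} for the family $A_k$. I would witness the claim with exactly the same family $\{A_k\}_{k>3}$ used there.

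Concretely, I would recall that in Proposition~\ref{prop:b_lt_delta_square} it was shown that $\delta_\square(A_k) = \Omega(k^2)$ and $b(A_k) = O(k)$, while the size of $A_k$ is $N = \Theta(k^4)$, so that $\delta_\square(A_k) = \Omega(b(A_k)\sqrt[4]{N})$. Applying the lemma $\delta(A_k) \ge \delta_\square(A_k)$, I would then conclude $\delta(A_k) = \Omega(k^2)$ as well, and hence
$$
\frac{b(A_k)}{\delta(A_k)} = O\!\left(\frac{k}{k^2}\right) = O\!\left(\frac{1}{k}\right) \xrightarrow[k\to\infty]{} 0,
$$
which is precisely $b = o(\delta)$ on this family.

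There is essentially no obstacle here: the corollary is a one-line consequence of the monotone-in-strength relationship $\delta_\square \le \delta$ together with the already-proven gap between $b$ and $\delta_\square$. The only thing worth stating explicitly is that $\sqrt[4]{N} = \Theta(k) \to \infty$, so the $\Omega(\sqrt[4]{N})$ separation factor genuinely gives an asymptotic (not merely constant) gap; this was already implicit in the closing remark of Proposition~\ref{prop:b_lt_delta_square}. If a self-contained square-matrix version is desired, I would additionally invoke the remark after that proposition (pad $A_k$ with columns of $\zero$'s, which keeps $\delta_\square = \Omega(k^2)$ by monotonicity and costs only $O(1)$ extra phrases), so that the same separation $b = o(\delta)$ also holds within the square-2D-string setting of Carfagna and Manzini.
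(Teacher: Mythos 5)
Your argument is correct and is exactly the derivation the paper intends: the corollary follows by combining the separation $\delta_\square(A_k)=\Omega(b(A_k)\sqrt[4]{N})$ from Proposition~\ref{prop:b_lt_delta_square} with the lemma $\delta_\square(\Mmn)\le\delta(\Mmn)$, on the same witness family $A_k$. Your explicit remark that $\sqrt[4]{N}=\Theta(k)\to\infty$ (so the gap is genuinely asymptotic) and the optional padding to a square matrix are both consistent with the paper's closing remarks and add nothing that conflicts with its proof.
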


Even 2D SLPs can be noticeable smaller than $\delta_\square$.

\begin{proposition}
There exists a 2D string family where $\delta = \Omega(g\sqrt[4]{N}/\log N)$, where $N$ is the size of the 2D string.
\end{proposition}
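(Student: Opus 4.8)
The plan is to reuse the very 2D string family $A_k$ constructed in the proof of Proposition~\ref{prop:b_lt_delta_square}. There we already know that $A_k$ has size $N = m\times n = 2k(k-1)\times k(k+2)=\Theta(k^4)$, so $\sqrt[4]{N}=\Theta(k)$ and $\log N=\Theta(\log k)$, and that $\delta_\square(A_k)=\Omega(k^2)$; since $\delta\ge\delta_\square$ for every 2D string, also $\delta(A_k)=\Omega(k^2)$. Hence the only thing left to prove is that $A_k$ admits a small 2D SLP, say of size $O(k)$ (anything up to $O(k\log k)$ would do): then $g(A_k)\cdot\sqrt[4]{N}/\log N = O(k)\cdot\Theta(k)/\Theta(\log k)=O(k^2/\log k)=O(\delta(A_k))$, which is exactly the claimed relation. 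Because the lower bound goes through $\delta_\square$, this simultaneously shows $\delta_\square=\Omega(g\sqrt[4]{N}/\log N)$, i.e.\ the bound survives the square-based definition of~\cite{CarfagnaManzini2023}.

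The grammar would be assembled bottom-up. First I would produce a handful of ``all-zero rectangle'' variables by iterated horizontal/vertical doubling along the binary expansions of $k$, $k^2$ and $m=2k(k-1)$: the $1\times k^2$ row $\zero^{k^2}$, the $1\times k$ row, the $k\times 1$ column, and the $m\times k$ left/right borders, each costing $O(\log k)$ rules; and, starting from $\zero^{k^2}$, the whole ladder of all-zero blocks $Z_{h\times k^2}$ for $h=1,\dots,k$ via a linear chain $Z_{h\times k^2}\rightarrow Z_{(h-1)\times k^2}\ovrt Z_{1\times k^2}$, for $O(k)$ extra rules. Second, the only two non-constant row shapes occurring inside a block, $R_2=(\one\zero^{k-1})^k$ and $R_3=(\one\zero^{k})^{k-1}\one$, are each obtained in $O(\log k)$ rules by repeating a unit row with the doubling trick. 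Third, for each $i\in[2\dd k]$ the $k\times k^2$ submatrix $B(i,1)\ohrz\cdots\ohrz B(i,k)$ is precisely $R_2$, then $i-2$ all-zero rows, then $R_3$, then $k-i$ all-zero rows, i.e.\ $M_i := R_2\ovrt Z_{(i-2)\times k^2}\ovrt R_3\ovrt Z_{(k-i)\times k^2}$, with the degenerate $i=2$ and $i=k$ handled by dropping the absent zero block (2D SLPs forbid empty expansions); each $M_i$ uses $O(1)$ rules. Placing a fresh $Z_{k\times k^2}$ on top gives the block $D_i := Z_{k\times k^2}\ovrt M_i$, a linear chain $C := D_2\ovrt D_3\ovrt\cdots\ovrt D_k$ builds the central column, and finally $A_k := Z_{m\times k}\ohrz C\ohrz Z_{m\times k}$. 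Counting, $O(\log k)$ rules for the basic gadgets plus $O(k)$ rules for the $M_i$'s, the $D_i$'s and the two chains yield a 2D SLP of size $O(k)$ for $A_k$, hence $g(A_k)=O(k)=O(\sqrt[4]{N})$.

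Combining the two halves, $\delta(A_k)=\Omega(k^2)$ while $g(A_k)\sqrt[4]{N}/\log N = O(k^2/\log k)=O(k^2)$, so $\delta(A_k)=\Omega(g(A_k)\sqrt[4]{N}/\log N)$ and the statement follows (with room to spare — the true gap is an extra $\Theta(\log k)$ factor). I expect the second paragraph to be the only delicate point: one has to verify that every horizontal (resp.\ vertical) concatenation in the construction matches row (resp.\ column) counts, that the full collection of all-zero rectangles of heights $0,\dots,k$ and widths $k$, $k^2$ and $n$ is produced within the $O(k)$ budget, and that the ``height-$0$'' blocks arising at $i=2$ and at $i=k$ are removed by special-casing the corresponding rules. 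Everything else — the value of $N$, the inequality $\delta\ge\delta_\square=\Omega(k^2)$, and the final arithmetic — is immediate from Proposition~\ref{prop:b_lt_delta_square} and the definitions.
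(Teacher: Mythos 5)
Your proposal is correct and follows essentially the same route as the paper: the same family $A_k$, the lower bound $\delta \ge \delta_\square = \Omega(k^2)$ inherited from Proposition~\ref{prop:b_lt_delta_square}, and an explicit 2D SLP mirroring the macro-scheme decomposition to bound $g$. Your grammar is slightly sharper (size $O(k)$ rather than the paper's $O(k\log k)$, thanks to the linear chain sharing the all-zero blocks), but this only gives the claimed bound with an extra $\log$ factor of slack and does not change the argument.
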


\begin{proof}We show that in the same family of strings $A_k$ of Proposition~\ref{prop:b_lt_delta_square}, it holds that $g = O(\sqrt[4]{N}\log N)$ whereas $\delta_\square = \Omega(\sqrt{N})$. Notice that it is always possible to generate a string $\zero_{k_1\times k_2}$ with a 2D SLP of size $\Theta(\log (k_1k_2))$. Hence, we can obtain 2D SLPs generating all the phrases of this type in the macro scheme of Proposition~\ref{prop:b_lt_delta_square}, having total size $O(\log N^b) = O(b\log N) = O(\sqrt[4]{N}\log N)$. Similarly, to generate $(\one\zero^{k-1})^k$ and $(\one\zero^k)^{k-1}\one$ we need 2D SLPs whose size sums to $O(\log k) = O(\log N)$. Finally, we need $O(\sqrt[4]{N}\log N)$ new rules to merge all the 2D SLPs described before. The total size of the 2D SLP is indeed $O(\sqrt[4]{N}\log N)$. Thus, the result follows.
\end{proof}

\begin{corollary}
There exists a string family where $g = o(\delta)$.    
\end{corollary}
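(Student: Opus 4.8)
The plan is to read this off the immediately preceding proposition. There the family $A_k$ (of size $N = mn = 2k(k-1)\cdot k(k+2) = \Theta(k^4)$) is shown to satisfy $g(A_k) = O(\sqrt[4]{N}\log N)$ while $\delta(A_k) \ge \delta_\square(A_k) = \Omega(\sqrt{N})$, equivalently $\delta = \Omega(g\sqrt[4]{N}/\log N)$. So the only thing left to do is the elementary observation that the multiplicative gap $\sqrt[4]{N}/\log N$ is unbounded: with $N = \Theta(k^4)$ this quantity is $\Theta(k/\log k)$, which tends to infinity with $k$. Hence $\delta(A_k)/g(A_k) \to \infty$, which is exactly the statement that $g = o(\delta)$ on the family $\{A_k\}_{k > 3}$.

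All of the combinatorial content needed here has already been established in the proof of the preceding proposition — namely the construction of $A_k$ by tiling $\zero$-blocks around the central band of blocks $B(i,1)\ohrz B(i,2)\ohrz\cdots\ohrz B(i,k)$; the lower bound $\delta_\square(A_k) = \Omega(k^2)$ obtained from the $k^2(k-1)(k+1)/4$ distinct $k\times k$ square factors contributed by $\mathcal{F}(k)$ (each reachable by sliding a $k\times k$ window over some $B(i,j)$); and the upper bound $g(A_k) = O(k\log k)$ obtained by assembling 2D SLPs of size $O(\log N)$ for the all-$\zero$ rectangles and for the two periodic row patterns $(\one\zero^{k-1})^k$ and $(\one\zero^k)^{k-1}\one$, then merging them with $O(k)$ extra rules. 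Consequently there is essentially no obstacle in the corollary itself; the proof is a one-line asymptotic comparison, and I would keep it short.

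For completeness I would add two remarks that require no extra work. First, since $\delta_\square \le \delta$ for every 2D string, the same family also witnesses $g = o(\delta_\square)$, so the separation persists even for the square-factor variant of the measure. Second, as already noted after Proposition~\ref{prop:b_lt_delta_square}, one may pad $A_k$ with columns of $\zero$'s to turn it into a square 2D string without changing either bound asymptotically (the padding is generated with $O(\log N)$ additional 2D SLP rules and does not decrease $\delta_\square$, which is monotone), so the statement also holds in the square-2D-string model of Carfagna and Manzini.
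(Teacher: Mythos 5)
Your proposal is correct and matches the paper's intent: the corollary is meant to follow immediately from the preceding proposition on the family $A_k$, by noting that the multiplicative gap $\sqrt[4]{N}/\log N = \Theta(k/\log k)$ is unbounded, which is exactly the one-line asymptotic comparison you give. The two additional remarks (the separation against $\delta_\square$ and the padding to a square 2D string) are consistent with observations the paper already makes around Proposition~\ref{prop:b_lt_delta_square} and require no further justification.
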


We point out that some of the bounds on the gaps showed in the previous propositions are not tight.
Nevertheless, we believe that the uncomparability of $\delta$ (or $\delta_\square$) and $\gamma$ (or $\gamma_\square$) with $b$ is enough to show their weaknesses as measures when it comes to consider bi-dimensional strings.
We defer other examples with better bounds for an extended version of this work.


\section{Conclusions and Future Work}

In this work, we have proposed extensions to many popular repetitiveness measures to make them suitable for two-dimensional strings. 

We have shown that the definitions of $\delta$ and $\gamma$ lose most of the properties that makes them meaningful in the 1D setting. In particular, the uncomparability with $b$ from Section~\ref{sec:differences} questions the usage of $\delta$ (or $\delta_\square$) and $\gamma$ (or $\gamma_\square$) as measures of repetitiveness when it comes to consider two-dimensional strings.
We wonder if there exists a repetitiveness measure comparable to $\delta$ on the 1D setting in terms of efficiency, and such that it preserves most of its properties and relationship with other measures when considering 2D strings.

On the other hand, the extensions of reachable and accessible measures that we have proposed maintain most of their usefulness and could have practical applications if studied more: analogously to what happens on one-dimensional strings, through (RL)SLPs one can access and extract 2D factors in $O(g)$ (or $O(g_{rl})$) space.
In particular, we wonder if it is possible to efficiently construct 2D (RL)SLPs with size bounded in terms of $b$ (or better comparable measures, if any).

It should be noticed that most of our definitions and results can be easily generalized to hold for $d$-dimensional strings: as a 2D string $\Mmn$ can be seen as the concatenation of $n$ 1D strings of size $m$ over the second dimension, a 3D string $M_{m\times n \times \ell}$ can be seen as the concatenation of $\ell$ 2D strings of size $m\times n$ over a third dimension, hence in the same fashion we can recursively define every $d$D string.
It follows that the $d$-dimensional versions of the substring complexity and the measures $\delta$ and $\gamma$ consider $d$-dimensional factors, while a $d$-dimensional macro scheme copies $d$-dimensional factors and store phrases when the size of these factors is 1 (i.e., length 1 on all the dimensions).
Analogously, for a $d$-dimensional (RL)SLPs we further consider rules that concatenate non-terminals symbols over all the $d$ dimensions.
Since for all $d$-dimensional string  with $d > 2$ and lengths 1 over $d-2$ dimensions it holds that the $d$-dimensional versions of such measures are reduced to the 2D setting, all the results obtained in this paper hold for all $d>2$.
We want also to remark that any symmetry or rotation over one of the $d$ dimensions does not affect any of the measures here presented.
%
%
%
\bibliography{bibliography}

\begin{thebibliography}{10}

\bibitem{BFIKMN21}
Hideo Bannai, Mitsuru Funakoshi, Tomohiro I, Dominik K{\"{o}}ppl, Takuya Mieno, and Takaaki Nishimoto.
\newblock A separation of {\(\gamma\)} and b via {Thue-Morse} words.
\newblock In {\em Proc. 28th International Symposium on String Processing and Information Retrieval (SPIRE)}, pages 167--178, 2021.

\bibitem{BilleLRSSW15}
Philip Bille, Gad~M. Landau, Rajeev Raman, Kunihiko Sadakane, Srinivasa~Rao Satti, and Oren Weimann.
\newblock Random access to grammar-compressed strings and trees.
\newblock {\em {SIAM} J. Comput.}, 44(3):513--539, 2015.

\bibitem{BrisaboaGGN24}
Nieves~R. Brisaboa, Travis Gagie, Adri{\'{a}}n G{\'{o}}mez{-}Brand{\'{o}}n, and Gonzalo Navarro.
\newblock Two-dimensional block trees.
\newblock {\em Comput. J.}, 67(1):391--406, 2024.

\bibitem{CarfagnaManzini2023}
Lorenzo Carfagna and Giovanni Manzini.
\newblock Compressibility measures for two-dimensional data.
\newblock In Franco~Maria Nardini, Nadia Pisanti, and Rossano Venturini, editors, {\em String Processing and Information Retrieval}, pages 102--113, Cham, 2023. Springer Nature Switzerland.

\bibitem{Charikar2005}
Moses Charikar, Eric Lehman, Ding Liu, Rina Panigrahy, Manoj Prabhakaran, Amit Sahai, and Abhi Shelat.
\newblock The smallest grammar problem.
\newblock {\em {IEEE} Trans. Inf. Theory}, 51(7):2554--2576, 2005.

\bibitem{Gallant1982}
John~K. Gallant.
\newblock {\em String Compression Algorithms}.
\newblock PhD thesis, Princeton University, 1982.

\bibitem{GJL2021}
Moses Ganardi, Artur Jez, and Markus Lohrey.
\newblock Balancing straight-line programs.
\newblock {\em J. {ACM}}, 68(4):27:1--27:40, 2021.

\bibitem{GiammarresiR97}
Dora Giammarresi and Antonio Restivo.
\newblock Two-dimensional languages.
\newblock In {\em Handbook of Formal Languages {(3)}}, pages 215--267. Springer, 1997.

\bibitem{Jez2015}
Artur Je\.{z}.
\newblock Approximation of grammar-based compression via recompression.
\newblock {\em Theor. Comput. Sci.}, 592:115--134, 2015.

\bibitem{KP18}
Dominik Kempa and Nicola Prezza.
\newblock At the roots of dictionary compression: string attractors.
\newblock In {\em {STOC}}, pages 827--840. {ACM}, 2018.

\bibitem{KNO2023}
Tomasz Kociumaka, Gonzalo Navarro, and Francisco Olivares.
\newblock Near-optimal search time in $\delta$-optimal space, and vice versa.
\newblock {\em {Algorithmica}}, 2023.

\bibitem{Delta}
Tomasz Kociumaka, Gonzalo Navarro, and Nicola Prezza.
\newblock Toward a definitive compressibility measure for repetitive sequences.
\newblock {\em {IEEE} Trans. Inf. Theory}, 69(4):2074--2092, 2023.

\bibitem{RePair}
N.~Jesper Larsson and Alistair Moffat.
\newblock Off-line dictionary-based compression.
\newblock {\em Proc. {IEEE}}, 88(11):1722--1732, 2000.

\bibitem{NavarroSurvey}
Gonzalo Navarro.
\newblock Indexing highly repetitive string collections, part {I}: Repetitiveness measures.
\newblock {\em ACM Computing Surveys}, 54(2):article 29, 2021.

\bibitem{NOP20}
Gonzalo Navarro, Carlos Ochoa, and Nicola Prezza.
\newblock On the approximation ratio of ordered parsings.
\newblock {\em {IEEE} Trans. Inf. Theory}, 67(2):1008--1026, 2021.

\bibitem{NOU2022}
Gonzalo Navarro, Francisco Olivares, and Cristian Urbina.
\newblock Balancing run-length straight-line programs.
\newblock In {\em Proc. 29th International Symposium on String Processing and Information Retrieval (SPIRE)}, pages 117--131, 2022.

\bibitem{NUspire21.1}
Gonzalo Navarro and Cristian Urbina.
\newblock On stricter reachable repetitiveness measures.
\newblock In {\em Proc. 28th International Symposium on String Processing and Information Retrieval (SPIRE)}, LNCS 12944, pages 193--206, 2021.

\bibitem{NUcpm23.2}
Gonzalo Navarro and Cristian Urbina.
\newblock L-systems for measuring repetitiveness.
\newblock In {\em {CPM}}, volume 259 of {\em LIPIcs}, pages 25:1--25:17. Schloss Dagstuhl - Leibniz-Zentrum f{\"{u}}r Informatik, 2023.

\bibitem{NUlatin24.2}
Gonzalo Navarro and Cristian Urbina.
\newblock Iterated straight-line programs.
\newblock In {\em {Proc. 16th Latin American Symposium on Theoretical Informatics (LATIN)}}, 2024.
\newblock To appear.

\bibitem{RRR2013}
Sofya Raskhodnikova, Dana Ron, Ronitt Rubinfeld, and Adam~D. Smith.
\newblock Sublinear algorithms for approximating string compressibility.
\newblock {\em Algorithmica}, 65(3):685--709, 2013.

\bibitem{Rytter2003}
Wojciech Rytter.
\newblock Application of lempel-ziv factorization to the approximation of grammar-based compression.
\newblock {\em Theor. Comput. Sci.}, 302(1-3):211--222, 2003.

\bibitem{SS82}
James~A. Storer and Thomas~G. Szymanski.
\newblock Data compression via textual substitution.
\newblock {\em J. {ACM}}, 29(4):928--951, 1982.

\end{thebibliography}

\newpage
\appendix

\section{Differences between 2D measures and 1D measures on linearized matrices}\label{sec:linearization}

In this section, we show some examples for which the order for the measures $\gamma$, $b$, and $g$ on 2D strings differ if compared with the same measures computed on some linearization of the matrix.


The following example shows a family of 2D strings for which the $\gamma$ measure applied to the string obtained by using the $\rowlin$ linearization has lower order of magnitude than the 2D version of $\gamma$.

\begin{example}\label{ex:gamma_linearization}
Consider the family of identity matrices $I_n$. Each row in $I_n$ is a unique submatrix of size $1 \times n$. As they are disjoint submatrices, each row needs at least one attractor position. Hence $\gamma(I_n) = \Omega(n)$. 
On the other hand, recall the definition of $\rowlin$ from Example~\ref{ex:delta_row_linearization}. One can observe that $\rowlin(I_n) = (\one\zero^n)^{n-1}\one$. Therefore, the set  $\{1,2,n+1\}$ is a string attractor for $\rowlin(I_n)$, i.e. $\gamma(\rowlin(I_n)) = O(1)$.
\end{example}


On the other hand, Example~\ref{ex:bglinearization} shows a family of 2D strings for which both the measures $g$ and $b$ are more efficient when considering the 2D setting instead of the linearization by row.

\begin{example}\label{ex:bglinearization}Consider the family obtained by taking the matrices $I_{n-1}$ where $n-1 = 2^k$ for some $k$, then appending a column of $\one$'s on the right and a row of $\zero$'s at the bottom (in that order). 
We have already shown in Example~\ref{ex:delta_row_linearization} that $\delta(\rowlin(\Mmn)) = \Omega(n)$ in this string family. Hence, this also holds for $g(\rowlin(\Mmn))$. On the other hand, a 2D SLP for a string in this family is obtained given by the following rules:

\begin{enumerate}
\item $S \rightarrow S' \ominus \zero_{1 \times n}$
\item $S' \rightarrow I_n \ohrz \one_{n \times 1}$
\item $I_k \rightarrow (I_{k/2} \ohrz 0_{k/2}) \ominus (0_{k/2} \ohrz I_{k/2})$
\item $\zero_k \rightarrow  (\zero_{k/2} \ohrz \zero_{k/2}) \ominus (\zero_{k/2} \ohrz \zero_{k/2})$
\end{enumerate}

One can see that we need only $O(\log n)$ of such rules.

Since $b(\rowlin(\Mmn))\geq \delta(\rowlin(\Mmn))$ and $b(\Mmn)\leq g(\Mmn)$ for all $\Mmn$, the analogous comparison can be done with this same example on the measure $b$.
\end{example}

\newpage

\section{Access in 2D strings in \texorpdfstring{$O(g_{rl})$}{O(grl)}  space and \texorpdfstring{$O(h)$}{O(h)} time}\label{sec:access}

In Algorithm~\ref{alg:2D-SLP-direct-access} we show the procedure to access any element in a 2D string $\Mmn$.

\begin{algorithm}[hb]\caption{Direct access for 2D SLPs and 2D RLSLPs in $O(h)$ time.}\label{alg:2D-SLP-direct-access}
\begin{algorithmic}[1]
\Require A 2D (RL)SLP $G$, a variable $A$ of $G$, and a cell position $(i,j) \in [1\dd m_A] \times [1 \dd n_A]$.
\Ensure The character $\gexp(A)[i][j]$ at position $(i,j)$ in $\gexp(A)$.
\Function{access}{$G, A, i, j$}
\If{$A \rightarrow a$}
    \State \Return $a$
\EndIf
\If{$A \rightarrow B \ovrt C$}
    \If{$i \le m_B$}
        \State \Return \Call{access}{$G, B, i, j$}
    \Else
        \State \Return \Call{access}{$G, C, i-m_B, j$}
    \EndIf
\EndIf
\If{$A \rightarrow B \ohrz C$}
    \If{$j \le n_B$}
        \State \Return \Call{access}{$G, B, i, j$}
    \Else
        \State \Return \Call{access}{$G, C, i, j-n_B$}
    \EndIf
\EndIf
\If{$A \rightarrow \ovrt^k B$}
    \State \Return \Call{access}{$G, C, i \bmod m_B, j$}
\EndIf
\If{$A \rightarrow \ohrz^k B$}
    \State \Return \Call{access}{$G, C, i, j \bmod n_B$}
\EndIf
\EndFunction
\end{algorithmic}
\end{algorithm}

\end{document}